\def\BibTeX{{\rm B\kern-.05em{\sc i\kern-.025em b}\kern-.08em
    T\kern-.1667em\lower.7ex\hbox{E}\kern-.125emX}}
\newtheorem{proposition}{Proposition}
\newtheorem{definition}{Definition}
\definecolor{mygray}{gray}{0.6}
\definecolor{myblue}{rgb}{0.8,0.85,1}
\newcolumntype{L}[1]{>{\raggedright\let\newline\\\arraybackslash\hspace{0pt}}m{#1}}
\newcolumntype{C}[1]{>{\centering\let\newline\\\arraybackslash\hspace{0pt}}m{#1}}
\newcolumntype{R}[1]{>{\raggedleft\let\newline\\\arraybackslash\hspace{0pt}}m{#1}}
\begin{document}

\title{A Game-theoretic Approach Towards Collaborative Coded Computation Offloading}

\author{

Jer Shyuan Ng, Wei Yang Bryan Lim, Zehui Xiong, Dusit~Niyato,~\IEEEmembership{Fellow,~IEEE}, Cyril Leung, \\Dong In Kim,~\IEEEmembership{Fellow,~IEEE}, Junshan Zhang,~\IEEEmembership{Fellow,~IEEE}, Qiang Yang,~\IEEEmembership{Fellow,~IEEE}%
\IEEEcompsocitemizethanks{\IEEEcompsocthanksitem JS.~Ng and WYB.~Lim are with Alibaba Group and Alibaba-NTU Joint Research Institute, Nanyang Technological University, Singapore.
\IEEEcompsocthanksitem Z.~Xiong is with Pillar of Information Systems Technology and Design, Singapore University of Technology Design.
\IEEEcompsocthanksitem D.~Niyato is with School of Computer Science and Engineering, Nanyang Technological University, Singapore. 
\IEEEcompsocthanksitem C. Leung is with The University of British Columbia and Joint NTU-UBC Research Centre of Excellence in Active Living for the Elderly (LILY). 
\IEEEcompsocthanksitem  DI.~Kim is with Sungkyunkwan University, South Korea.
\IEEEcompsocthanksitem J.~Zhang is with School of Electrical, Computer and Energy Engineering, Arizona State University, USA. 
\IEEEcompsocthanksitem Q.~Yang is with Hong Kong University of Science and Technology, Hong Kong, China.
}
}

\IEEEtitleabstractindextext{%
\begin{abstract} 

Coded distributed computing (CDC) has emerged as a promising approach because it enables computation tasks to be carried out in a distributed manner while mitigating straggler effects, which often account for the long overall completion times. Specifically, by using polynomial codes, computed results from only a subset of edge servers can be used to reconstruct the final result. However, incentive issues have not been studied systematically for the edge servers to complete the CDC tasks. In this paper, we propose a tractable two-level game-theoretic approach to incentivize the edge servers to complete the CDC tasks. Specifically, in the lower level, a hedonic coalition formation game is formulated where the edge servers share their resources within their coalitions. By forming coalitions, the edge servers have more Central Processing Unit (CPU) power to complete the computation tasks. In the upper level, given the CPU power of the coalitions of edge servers, an all-pay auction is designed to incentivize the edge servers to participate in the CDC tasks. In the all-pay auction, the bids of the edge servers are represented by the allocation of their CPU power to the CDC tasks. 
The all-pay auction is designed to maximize the utility of the cloud server by determining the allocation of rewards to the winners. Simulation results show that the edge servers are incentivized to allocate more CPU power when multiple rewards are offered, i.e., there are multiple winners, instead of rewarding only the edge server with the largest CPU power allocation. Besides, the utility of the cloud server is maximized when it offers multiple homogeneous rewards, instead of heterogeneous rewards.

\end{abstract}

\begin{IEEEkeywords}
Coded distributed computing, straggler effects mitigation, hedonic game, all-pay auction, Bayesian Nash equilibrium
\end{IEEEkeywords}}

\maketitle

\IEEEraisesectionheading{\section{Introduction}}



\IEEEPARstart{C}{oupled} with reliable wireless communication technologies, IoT devices can serve as important sources of sensor data for Artificial Intelligence (AI) technologies to be leveraged, towards the development of data-driven applications~\cite{liu2018survey}. In particular, many machine learning models are developed to monitor various large-scale physical phenomena for smart city applications, such as prediction of road conditions~\cite{kalim2016crater}, air quality monitoring~\cite{dutta2017towards} and tracking of medical conditions~\cite{jovanovic2019mobile}. Edge computing~\cite{shi2016promise} has emerged as a promising approach that extends cloud computing services to the edge of the networks. In particular, by leveraging on the computational capabilities, e.g., Central Processing Unit (CPU) power, of the edge servers, e.g., base stations and edge devices, e.g., laptops and tablets, the cloud server can offload its computation tasks to the edge servers and devices.

However, there are several challenges pertaining to the distributed edge computing network that need to be addressed for efficient and scalable implementation. Firstly, since several edge servers perform the distributed computation tasks collaboratively, the communication costs can be high due to the frequent exchange of intermediate results. Secondly, the response times vary across the edge servers due to several factors such as imbalanced work allocation, contention of shared resources and network congestion~\cite{dean2013tail,ananthanarayanan2010reining}. Thirdly, the confidentiality of the data may be compromised as eavesdroppers may monitor data transmission over wireless channels.

Coded distributed computing (CDC)~\cite{ng2020survey} has been proposed as an efficient method for distributed computation tasks at the edge of the network. In particular, coding techniques are used to design computation strategies that divide the entire dataset and allocate subsets of data to the edge servers for computations. In the distributed edge computing network, one of the main challenges is the straggler effects where the task completion time is determined by the slowest edge server as the cloud server needs to wait for all edge servers to return their results before it can reconstruct the final result. As a result, the latency of the distributed computation tasks can be high~\cite{aktas2018relaunch, aktas2017clones}. 
By using CDC schemes\footnote{CDC schemes do not only mitigate straggler effects, but can also reduce communication costs and ensure security in the distributed edge computing network. This paper focuses on CDC schemes that aim to mitigate straggler effects.}, instead of having to wait for all edge servers to complete their computation tasks, the cloud server only needs to wait for a subset of edge servers to return their results. Hence, CDC schemes can reduce computation latency by obviating the need to wait for the slower edge servers.


However, incentives are essential for the edge servers to participate in or to complete their allocated CDC subtasks. To design an appropriate incentive mechanism, it is important to consider the unique characteristics of the CDC framework. Specifically, even though the edge servers are each allocated a subset of the entire dataset for computations, some of the edge servers' computed results may not be used to reconstruct the final result, e.g., due to straggling. These edge servers in turn do not receive any compensation. As a result, this may discourage the participation of certain edge servers. To address this challenge, we propose an all-pay auction to model the competition between the different edge servers and at the same time, improve the participation of edge servers so as to elicit more CPU power for the CDC tasks. 

In distributed edge computing networks, the edge servers may work together with various edge devices, by forming coalitions in order to complete their computation tasks. To model the cooperation between the edge servers and devices, we propose a hedonic coalition formation game in which the edge devices decide which edge server to join based on their utility-maximizing objectives. In analogy to practical scenarios, the edge devices make decisions that maximize their utilities without taking into consideration the effect of their decisions on other edge servers or devices. 

The main aim of this work is to develop an incentive mechanism for enabling efficient completion of CDC tasks for IoT applications. 
Our key contributions are summarized as follows:
\begin{enumerate}


\item We highlight the importance of incentives in CDC, which is an issue ignored, but crucial toward economically sustainable distributed systems, by existing works.

\item We propose a two-level game theoretic approach to incentivize the edge servers to contribute their CPU power for the CDC tasks. 


\item We formally show that the edge servers may improve their utilities by forming coalitions. We, therefore, introduce a hedonic coalition formation game to achieve a stable coalitional structure.

\item We adopt an all-pay auction to model the competition between the different edge servers (with their coalitions of edge devices) which aim to win the rewards offered by the cloud server and analyze the different reward structures that affect the utility of the cloud server. 

\item We evaluate the performance of the proposed scheme. Simulation results show that the total amount of CPU power allocated for the CDC tasks is higher under the proposed scheme as compared to random CPU power allocation.

\end{enumerate}


The remainder of the paper is organized as follows. Section~\ref{sec:related} highlights the related works. Section~\ref{sec:system} presents the system model and problem formulation. Section~\ref{sec:lower} and Section~\ref{sec:upper} discuss the hedonic coalition formation game and the design of an all-pay auction respectively. Section~\ref{sec:simulate} reports the simulation results and analysis of the proposed two-level game-theoretic approach. Section~\ref{sec:conclude} concludes the paper.

\section{Related Work}
\label{sec:related}

We discuss the recent studies related to three different areas, i.e., (i) coded distributed computing, (ii) coalitional formation game, and (iii) auction design.

\subsection{Coded Distributed Computing (CDC)}

Given the emergence of big data which necessitates computation- and storage-intensive processing, large-scale distributed systems have received significant attention from both the research and industrial communities. A number of studies in the literature have focused on the minimization of communication load of the distributed computation tasks. Network coding in the context of distributed cache systems has been a promising approach to increase network throughput and improve performance by jointly optimizing data placement and delivery phases~\cite{maddah2014fundamental,karam2016hierarchical}. 

Recently, coding techniques have increasingly been used in distributed computing networks. One of the active research areas is the minimization of the communication load in the data shuffling phase through coded multicast transmission as this phase accounts for a large proportion of the overall execution time~\cite{zhang2013performance}. There is a tradeoff between computation load and communication load~\cite{li2018tradeoff}. In order to reduce the number of communication rounds, which is significant for distributed iterative algorithms, \cite{haddadpour2018cross} proposes a computing technique that jointly codes the computation at multiple iterations by leveraging on the storage and computation redundancy of the workers. The work in~\cite{wan2020topological} considers the network topology of the distributed systems in designing an efficient CDC scheme for practical implementation. It relaxes the assumption that the physically-separated servers are connected to a single error-free common communication bus.

Apart from the studies that focus on the minimization of communication load in the distributed computing networks, coding techniques are also used to alleviate the stragglers' delays that limit the performance as distributed computing systems are scaled up. This is achieved by reducing the recovery threshold. Various CDC schemes are proposed for different computation problems, e.g., matrix multiplication~\cite{yu2017polynomial,lee2017high}, gradient descent~\cite{raviv2019gradient}, convolution~\cite{dutta2017coded}, linear transform~\cite{wang2018fundamental} and Fourier transform~\cite{yu2017fourier}. Instead of ignoring the partial computations that are completed by the stragglers, several studies such as~\cite{kiani2018exploitation} and~\cite{ozfatura2019speeding} exploit the work completed by the stragglers through sequential processing and multi-message communication. In~\cite{dai2020sazd}, the computation load is reduced by removing complex multiplication and division operations in the encoding and decoding phases. 

However, to the best of our knowledge, there are few studies that focus on the design of incentive mechanisms for CDC tasks. Given the interactions of autonomous and non-cooperative agents in the networks, an effective incentive mechanism design is an important step towards realizing the scalable and efficient implementation of CDC schemes in distributed edge computing networks.


\subsection{Coalitional Formation Game}

Due to the limited resources of a single device in completing the allocated task individually, coalition formation games in computation offloading have been investigated. In~\cite{ennya2018computing}, the fogs can cooperate with each other by sharing their resources in order to offer better quality of service and experience for the users. A joint coalition-and-pricing based data offloading framework is proposed in~\cite{zhang2018data} to maximize the data throughput and determine the equilibrium prices that promote cooperation between devices and the edge servers. Different from the generic coalitional formation games where the utilities of the players depend on the coalitional structures, the hedonic coalitions are formed based on the individual preference of the players. As such, the utilities of the players depend solely on the members of the coalitions to which the worker belongs. In~\cite{wahab2018trust}, a trust-based hedonic coalitional game is formulated to model the formation of trustworthy multi-cloud communities that are resilient to collusion attacks by malicious devices. 

It is often assumed that the resources of a coalition are dedicated for a particular computation task. In practical scenarios, each coalition may be required to complete multiple computation tasks. Hence, in order to incentivize the edge servers to allocate more resources to complete the CDC tasks, we adopt an auction scheme.


\subsection{Auction Design}

The task of auction design for optimal allocation of resources and tasks is well-explored in the literature. In particular, in crowdsensing applications where the usefulness of the applications depends on the quantity and quality of data, auction theory is one of the important tools to achieve mutual agreement between the crowd-sourcer and the users. Specifically, an all-pay auction is used to encourage the contributions of users, e.g., data, that are used to solve a crowd-sourcing problem. In an all-pay auction, not all users that contribute to the task defined by the crowd-sourcer are rewarded. This is similar to the blockchain model illustrated in~\cite{xu2020dynamic} where only the miner which successfully generates a new block is rewarded.

In the literature, the design of the all-pay auctions considers different objectives and approaches. For example, some studies focus on the maximization of the quality of the contributions~\cite{archak2009optimal} while others focus on the maximization of the sum of the contributions~\cite{yoon2012optimal, ng2020collaborative}. The work in~\cite{xiao2016asymmetric} studies the total expected performance of asymmetric players in competing for heterogeneous prizes under a complete-information setting. In contrast, the study of~\cite{tie2014optimal} considers an incomplete information setting where users do not know how other users value the reward offered by the crowd-sourcer. Besides, the crowd-sourcer maximizes its profit by rewarding the winner based on its contribution. Several studies such as~\cite{cohen2008allocation} and~\cite{wen2016optimal} have analyzed the optimal prize structures for crowdsensing platforms. 

However, the formation of coalitions in auctions has seldom been considered. Here, we adopt a game-theoretic approach to incentivize the edge servers to contribute their CPU power for the CDC tasks.

\section{System Model and Problem Formulation}
\label{sec:system}

\begin{figure}[!t]
\includegraphics[width=\linewidth]{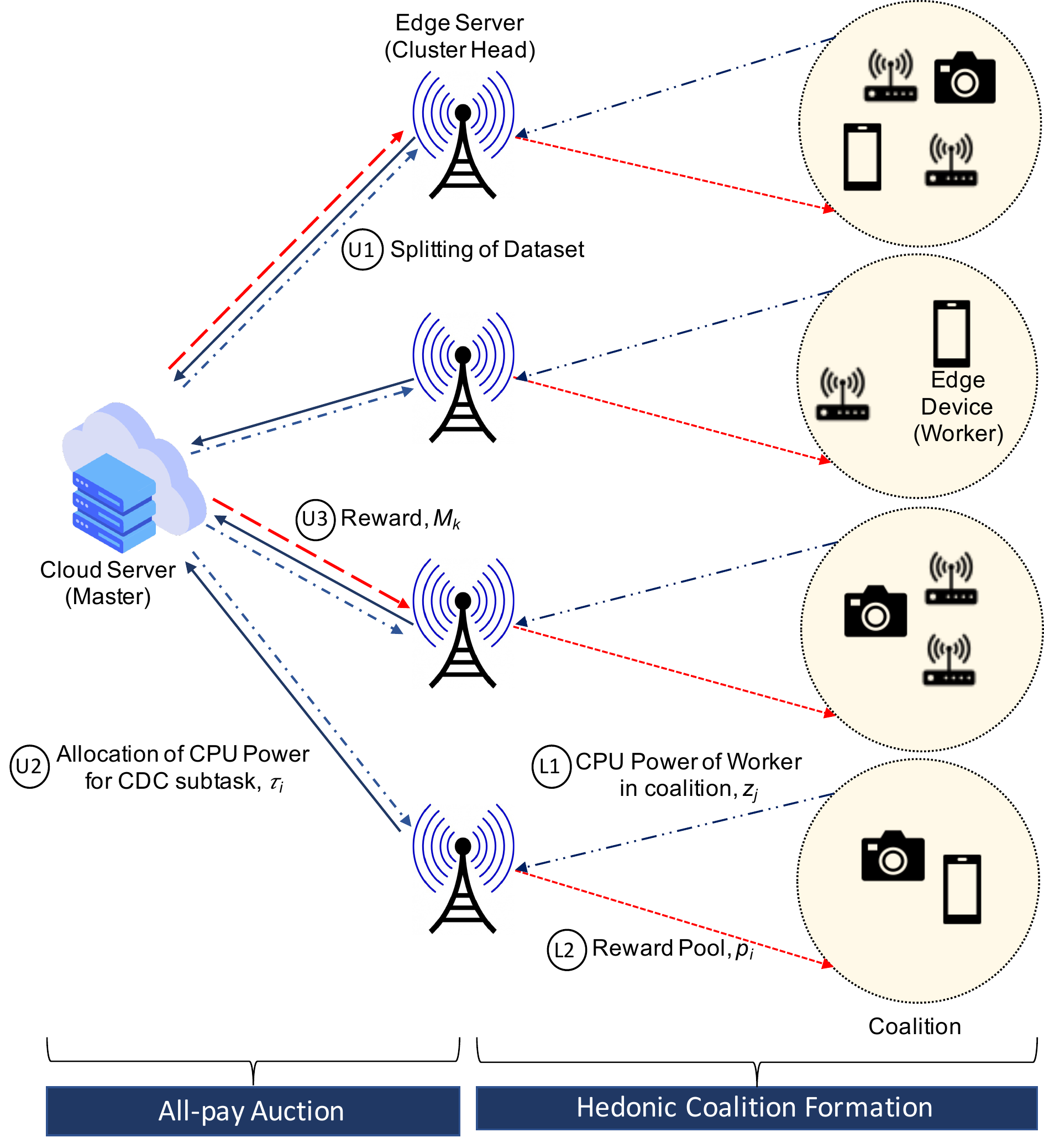}
\caption{\footnotesize{System model consists of the cloud server (master), edge servers (cluster heads) and edge devices (workers). In the lower level, there are two steps: (L1) the workers in the coalitions allocate their CPU power, and (L2) the cluster heads offer a reward pool to the workers in the coalitions. In the upper level, there are three steps: (U1) the master splits the dataset using polynomial codes, (U2) the cluster heads allocate CPU power for the CDC subtasks, and (U3) the cluster heads are rewarded for completing the allocated CDC subtasks. }}
\label{system}
\end{figure}

\begin{table}[t]
\caption{System Model Parameters.} 
\label{tab:notations}
\centering
\begin{tabularx}{8.7cm}{|Sl|X|}

\hline
\rowcolor{lightgray}
\textbf{Parameter}& \textbf{Description}\\  \hline

$I$ & Number of cluster heads\\ \hline
$J$ & Number of workers\\ \hline
$K$ & Recovery threshold\\ \hline
$\rho_i$ & Reward pool by cluster head\\ \hline
$v(S_i)$ & Value of coalition\\ \hline
$x_j^{S_i}$ & Utility of worker $j$ in coalition $S_i$\\ \hline
$\sigma$ & Total amount of reward\\ \hline
$M_k$ & Size of reward\\ \hline
$\tilde{\mathbf{A}}_i$, $\tilde{\mathbf{B}}_i$ & Allocated matrices to cluster heads\\ \hline
$\tilde{\mathbf{C}}_i$ & Computed results by cluster heads\\ \hline
$\pi$ & Expected utility of master\\ \hline
$z_i$ & CPU power of cluster head\\ \hline
$z_j$ & CPU power of worker\\ \hline
$\mu_{ij}$ & Communication cost between worker $j$ and cluster head $i$\\ \hline
$\kappa$ & Effective switch coefficient\\ \hline
$a_i$ & Total number of CPU cycles\\ \hline
$\theta^{p}$ & Unit cost of computational energy\\ \hline
$\theta^{c}$ & Unit cost of communication energy\\ \hline
$c_i$ & Communication energy of cluster head $i$\\ \hline
$\tau_i$ & Allocated CPU power for the CDC subtasks\\ \hline
$\alpha_i$ & Utility of cluster head\\ \hline
$u_i$ & Expected utility of cluster head\\ \hline
$u_j(S_i)$ & Preference function of worker $j$ in coalition $S_i$\\ \hline
$v_i$ & Valuation of cluster head for total reward\\ \hline
$p_i^k$ & Probability of winning the reward\\ \hline

\end{tabularx}
\end{table}

\subsection{System Setting}

We consider a heterogeneous distributed edge computing network as illustrated in Fig.~\ref{system}. The system model consists of a master, i.e., cloud server, and a set $\mathcal{I}=\{1,\ldots,i,\ldots,I\}$ of $I$ cluster heads, e.g., edge servers, that have different computational capabilities and belong to different service providers. Moreover, there are $J$ workers, e.g., edge devices, represented by the set $\mathcal{J}=\{1,\ldots, j,\ldots,J\}$, that also have different computational capabilities to facilitate in the computation tasks. In IoT networks, for example, the ubiquity of the IoT devices as well as their on-board sensing and processing capabilities are leveraged to collect data for many innovative IoT applications. Given the large amounts of sensor data collected from different IoT devices, the master aims to perform the training of an AI model to complete a user-defined data processing task. As the number of IoT devices increases, so does the size of the dataset that the master needs to handle. However, the master may not have sufficient resources, i.e., computation power, to handle the growing dataset. Instead, it may utilize the resources of the cluster heads to complete the computation tasks in a distributed manner. The cluster heads may cooperate with the workers to increase their capabilities in completing the computation tasks. In particular, more CPU power can be allocated for the computation tasks.

\subsection{Coded Distributed Computing (CDC)}

One of the main challenges in performing distributed computation tasks is the straggler effect. In order to reduce the computation latency of the distributed computation tasks, the master applies CDC schemes over the distributed edge computing network. Coding techniques such as polynomial codes~\cite{yu2017polynomial} can be used to mitigate straggler effects by reducing the recovery threshold, i.e., the number of cluster heads that need to submit their results for the master to reconstruct the final result. In order to perform coded distributed matrix multiplication computations, i.e., $\mathbf{C}=\mathbf{A}^\top \mathbf{B}$ where $\mathbf{A}$ and $\mathbf{B}$ are input matrices\footnote{The matrix multiplication may also involve more than two matrices. Our system model can be easily extended to solve the matrix multiplication of more than two matrices.}, $\mathbf{A} \in \mathbb{F}_q^{s\times r}$ and $\mathbf{B} \in \mathbb{F}_q^{s\times t}$ for integers $s$, $r$, and $t$ and a sufficiently large finite field $\mathbb{F}_q$, there are four important steps:

\begin{enumerate}
\item \emph{Task Allocation: }Given that all cluster heads are able to store up to $\frac{1}{m}$ fraction of matrix $\mathbf{A}$ and $\frac{1}{n}$ fraction of matrix $\mathbf{B}$, the master divides the input matrices into submatrices $\tilde{\mathbf{A}}_i=f_i(\mathbf{A})$ and $\tilde{\mathbf{B}}_i=g_i(\mathbf{B})$, where $\tilde{\mathbf{A}}_i \in \mathbb{F}_q^{s\times \frac{r}{m}}$ and $\tilde{\mathbf{B}}_i \in \mathbb{F}_q^{t\times \frac{r}{n}}$ respectively. Specifically, $\mathbf{f}$ and $\mathbf{g}$ represent the vectors of functions such that $\mathbf{f}=(f_1,\ldots,f_i,\ldots,f_I)$ and $\mathbf{g}=(g_1,\ldots,g_i,\ldots,g_I)$, respectively. Then, the master distributes the submatrices to the cluster heads over the wireless channels for computations.

\item \emph{Local Computation: }Each cluster head $i$ is allocated submatrices $\tilde{\mathbf{A}}_i$ and $\tilde{\mathbf{B}}_i$ by the master. Based on the allocated submatrices, the cluster heads perform matrix multiplication, i.e., $\tilde{\mathbf{A}}_i^\top \tilde{\mathbf{B}}_i$, $\forall i \in \mathcal{I}$.

\item \emph{Wireless Transmission: }Upon completion of the local computations, each cluster head transmits its computed results, i.e., $\tilde{\mathbf{C}}_i=\tilde{\mathbf{A}}_i^\top \tilde{\mathbf{B}}_i$ to the master over the wireless communication channels.

\item \emph{Reconstruction of Final Result: }By using coding techniques, the master is able to reconstruct the final result upon receiving $K$ out of $I$ computed results by using decoding functions. In other words, the master does not need to wait for all $I$ cluster heads to complete their allocated CDC subtasks. Note that although there is no constraint on the decoding functions to be used, a low-complexity decoding function such as the Reed-Solomon decoding algorithm~\cite{didier2009efficient} ensures the efficiency of the overall matrix multiplication computations.
\end{enumerate}

By using the polynomial codes~\cite{yu2017polynomial}, the optimum recovery threshold that can be achieved where each cluster head is able to store up to $\frac{1}{m}$ of matrix $\mathbf{A}$ and $\frac{1}{n}$ of matrix $\mathbf{B}$ is defined as:
\begin{equation}
K=mn.
\end{equation}

The training of an AI model may involve various types of distributed computation problems, e.g., matrix multiplication, stochastic gradient descent, convolution and Fourier transform. Without loss of generality, we consider the distributed matrix multiplication computations. Matrix multiplication is an important operation underlying many data analytics applications, e.g., machine learning, scientific computing and graph processing~\cite{yu2017polynomial}. 

However, there needs to be an incentive for a cluster head to be one of the $K$ cluster heads to complete their local computations of CDC subtasks and return their computed results to the master. 

\subsection{Two-level Game-theoretic Approach}
In this paper, we focus our study on a two-level game-theoretic approach as follows: (i) in the lower level, we adopt a hedonic coalition formation game to investigate the coalition formation of workers to facilitate the computation tasks of the cluster heads, and (ii) in the upper level, we study the all-pay auction to encourage the cluster heads, given the coalitions of workers formed, to allocate more CPU power for the CDC subtasks while maximizing the utility of the master. It is assumed that the cluster heads do not consider forming coalitions among themselves since they are independent and competing service providers.



\subsubsection{Lower-level Hedonic Coalition Formation}

Given $I$ cluster heads and $J$ workers in the network, the formation of the coalitions is derived in the lower level. In order to encourage more workers to facilitate its computation tasks, each cluster head offers a reward pool to the coalition of workers. The reward pools for the cluster heads maybe different depending on their available budgets. The reward that each worker receives is a function of its proportion of CPU power contributed in the coalition, which, for example, can be measured from the computation latency of that worker. On the one hand, workers are incentivized to join a cluster head that has a greater reward pool in the hope of receiving a higher reward. On the other hand, as more workers join a cluster head, each worker will receive a smaller proportion of the reward pool as the pool needs to be shared among more workers. In addition to the amount of rewards, the workers' utilities are also affected by its computation and communication costs. Hence, the workers make their decisions based on their utilities. Each worker $j$ can choose to join any cluster head $i \in \mathcal{I}$. Note that each worker is only allowed to choose to facilitate the computation tasks of one of the cluster heads. In practice, worker $j$ may be limited in the choice of cluster heads it can join, e.g., due to geographical location.



\subsubsection{Upper-level All-pay Auction}

The lower-level coalition formation game determines the amount of CPU power that each coalition has. The coalitions with greater CPU power are more valuable to the master as they are able to complete the CDC subtasks within a shorter period of time. 
Since a cluster head with its coalition of workers may need to work on several computation tasks simultaneously, they may not allocate all their CPU power for CDC subtasks. In order to incentivize the cluster heads to allocate more CPU power to complete the CDC subtasks, the master offers rewards to the cluster heads. Since computed results are required from only a subset of cluster heads, the cluster heads need to compete for the rewards. In particular, we explore an all-pay auction mechanism whereby the cluster heads bid for the rewards. In this all-pay auction, although all cluster heads allocate CPU power to perform the computations on the allocated dataset, only $K$ cluster heads are rewarded. As such, the all-pay auction is designed such that the utility of the master is maximized by incentivizing the cluster heads to allocate more CPU power for the CDC subtasks.

In traditional auctions such as first-price and second-price auctions, only the winners of the auctions pay. In contrast, in all-pay auctions, regardless of whether the bidders win or lose, they are required to pay to participate in the auction. In this all-pay auction, the bids of the edge servers are represented by their CPU power, i.e., the number of CPU cycles, allocated by the edge servers to complete the CDC subtasks. In other words, the larger the CPU power allocated, the higher the bid of the edge server. 

There are two advantages of an all-pay auction~\cite{luo2016incentive}. Firstly, it reduces the probability of non-completion of allocated subtasks, thus allowing the cloud server to reconstruct the final result. This differs from traditional auctions in which the winners of the auctions can still choose not to complete their tasks and give up the reward that is promised by the auctioneer (cloud server). As a result, the auctioneer needs to conduct another round of auction. Secondly, it reduces the coordination cost between the auctioneer and the bidders (edge servers). Specifically, in traditional auctions, the participants need to bid then contribute whereas in all-pay auctions, the bids of the participants are directly determined by their contributions. In other words, the participants do not need to bid explicitly in all-pay auctions. This is particularly useful for the development of a scalable network since the communication overheads are reduced.

\subsubsection{Interaction between Lower and Upper Levels}

In the lower level, the workers form coalitions to support the computation tasks of the cluster heads, increasing the capabilities of the cluster heads to complete their computation tasks by, for example, reducing computation latency or increasing computation accuracy. Given the coalitions of workers formed, the cluster heads need to allocate CPU power for their computation tasks. Without proper incentive mechanisms, the cluster heads may randomly allocate CPU power for their computation tasks, which is not optimal as it does not maximize the utilities of the cluster heads. In order to incentivize the cluster heads to allocate CPU power for the CDC subtasks, an all-pay auction is proposed in the upper level. The lower-level hedonic coalition formation game helps to improve the utilities of the cluster heads by allowing them to allocate their equilibrium CPU power in the upper-level all-pay auction. Specifically, without forming coalitions, the capabilities of the cluster heads are limited by their own CPU power, thus not allowing them to allocate their equilibrium CPU power for the CDC subtasks, which may be greater than their own CPU power. As such, the cluster heads may not win the reward offered by the master. Therefore, the two-stage game theoretic approach ensures that the utilities of the cluster heads are maximized by allocating CPU power for the CDC subtasks.

\section{Lower-level Hedonic Coalition Formation}
\label{sec:lower}

In this section, we formulate the problem of collaborative execution of computation tasks as a hedonic coalition formation game. 
To form a coalition, each cluster head broadcasts its intention to form a coalition to all workers in the network. Each cluster head $i$ offers a reward pool $\rho_i$ to the coalition of workers. To decide whether to join or leave a coalition, each worker also compares its utility in the current coalition and the utility of joining another coalition. If the utility of joining another coalition is higher, the worker leaves the current coalition and joins another coalition, hence forming a new coalitional structure. The coalitional structure is stable when no worker has incentive to change its current coalition.




\subsection{Hedonic Coalition Formation Formulation}

We present the definitions for hedonic coalition formation formulation.

\begin{definition}
A coalition of workers is denoted by $S_i \subseteq \mathcal{J}$ where $i$ is the index of the cluster head.
\end{definition}

In particular, workers in coalition $S_i$ facilitate the computation tasks of cluster head $i$, $\forall i \in \mathcal{I}$.

\begin{definition}
A partition or coalitional structure is a set of coalitions that spans all workers in $\mathcal{J}$. The coalitional structure is represented by $\Pi=\{S_1,\ldots,S_i,\ldots,S_I\}$, where $S_i \cap S_{i'}=\emptyset$ for $i \neq i'$, $\bigcup^I_{i=1}S_i=\mathcal{J}$ and $I$ is the total number of coalitions in coalitional structure $\Pi$~\cite{saad2009coalitional}.
\end{definition}
$\mathcal{J}$ denotes the coalition of all workers, which is also known as the grand coalition. The formation of a grand coalition means that all workers facilitate the computation tasks of a single cluster head. A singleton coalition is a coalition that only contains a single worker where only a worker facilitates the computation tasks of a cluster head. Note that the total number of coalitions equals the number of cluster heads in the network. If there is no worker that is willing to facilitate the computation tasks of cluster head $i$, $\forall i \in \mathcal{I}$, the coalition $S_i$ associated with cluster head $i$ is represented by an empty set, $\emptyset$.

\begin{definition}
A coalitional structure $\Pi^*=\{S_1^*,\ldots, S_i^*,\ldots, S_I^*\}$ is a stable coalitional structure if no coalition $S_i^* \in \Pi$ has an incentive to change the current coalitonal structure $\Pi$ by merging with another coalition $S_{i'}^*$, $S_i^* \cap S_{i'}^*=\emptyset$ for $i \neq i'$, or splitting into smaller disjoint coalitions.
\end{definition}

The value of any coalition $S_i \in \Pi$ is the total amount of CPU power of both the cluster head and workers in the coalition. 
The value of coalition $S_i$, which is denoted as $v(S_i)$, is expressed as follows:
\begin{equation}
v(S_i)=\sum_{j\in S_i}z_j+z_i,
\end{equation}
where $z_j$ and $z_i$ are the amount of available CPU power of worker $j$ and cluster head $i$, respectively.


Each cluster head $i$, $\forall i \in \mathcal{I}$, offers different amount of reward pool, $\rho_i$. Each worker $j$ in coalition $S_i$ receives a proportion of the reward pool offered by the cluster head, for which the coalition $S_i$ provides support. The amount of reward that each worker receives depends on its proportion of the CPU power in the coalition, which, for example, can be measured from the worker's computation latency. The greater the proportion of CPU power, the larger the amount of reward the worker receives. Specifically, the utility of worker $j$ in coalition $S_i$, 
is denoted as follows:
\begin{equation}
\label{eqn:utilwork}
x_j^{S_i}=\frac{z_j}{\sum_{j \in S_i}z_j}\rho_i-\delta_{j}z_j-\mu_{ij},
\end{equation}
where $\delta_{j}$ is the unit cost of CPU power of worker $j$ and $\mu_{ij}$ is the communication cost for worker $j$ to reach cluster head~$i$. In particular, the utility of worker $j \in \mathcal{J}$ depends only on the members of the coalition that it belongs to. 

Based on its own utility, each worker $j \in \mathcal{J}$ needs to build its own preference over all possible coalitions that it can join, where each worker $j$ compares the utilities of joining different coalitions. As such, the concept of preference relation is introduced to illustrate the preference of each worker over all possible coalitions.

\begin{definition}
For any worker $j \in \mathcal{J}$, a preference relation $\succ_j$ is defined as a complete, reflexive and transitive binary relation over the set of all coalitions that worker $j$ can possibly join~\cite{anna2002stability}.
\end{definition}

The preference relation of worker $j \in \mathcal{J}$ can be expressed as follows:
\begin{equation}
S_1 \succeq_j S_2 \iff u_j(S_1) \geq u_j(S_2),
\end{equation}
where $S_1 \subseteq \mathcal{J}$ and $S_2 \subseteq \mathcal{J}$ are two possible coalitions that worker $j$ may join, $u_j(S_i)$ is the preference function for any worker $j \in \mathcal{J}$ and for any coalition $S_i$, $\forall i \in \mathcal{I}$.  In particular, for any worker $j \in \mathcal{J}$, given two coalitions $S_1 \subseteq \mathcal{J}$ and $S_2\subseteq \mathcal{J}$ where $j \in S_1$ and $j \in S_2$, $S_1 \succeq_j S_2$ means that worker $j$ prefers coalition $S_1$ over coalition $S_2$, or at least worker $j$ values both coalitions equally. In addition, its asymmetric counterpart, which is denoted as $\succ_j$, when used in $S_1 \succ_j S_2$ indicates that worker $j$ strictly prefers coalition $S_1$ over coalition $S_2$. It is worth noting that the preference relation, $\succeq_j$ is defined to allow the workers to quantify their preferences, which can be application-specific. The preference relation can be expressed as a function of several parameters such as the payoffs of workers in joining different coalitions and the proportion of the contribution of each worker in the same coalition.

The preference function of worker $j$ in coalition $S_i$ which is represented by $u_j(S_i)$ is defined as follows:
\begin{equation}
\label{eqn:function}
u_j(S_i)=
	\begin{cases}
   	x_j^{S_i}, & \text{if $S_i \notin h(j)$}, \\
   	-\infty, & \text{otherwise},
 	\end{cases}
\end{equation}
where $x_j^{S_i}$ is the utility of worker $j$ in coalition $S_i$ defined in Equation~(\ref{eqn:utilwork}) and $h(j)$ is the history set of worker $j$ that contains the list of coalitions that the worker $j$ has previously joined before the formation of the current coalitional structure $\Pi$. More specifically, the history set of worker $j \in \mathcal{J}$, $h(j)=\{S_{i_0}^{0},\ldots,S_{i_\lambda}^{\lambda},\ldots,S_{i_\Lambda}^{\Lambda}\}$, where $i_\lambda \in \mathcal{I}$, $j \in S_{i_\lambda}$ and $\Lambda$ represents the total number of changes in coalitions formed by worker $j$. Each time when a new coalition is formed, each worker $j \in \mathcal{J}$ updates its history set $h(j)$ by adding a new coalition $S_{i_\lambda}^{\lambda}$, where $i_\lambda \in \mathcal{I} $ and $j \in S_{i_\lambda}$.

As such, based on Equation~(\ref{eqn:function}), the preference of worker $j \in \mathcal{J}$ over the different coalitions is related to its utility defined in Equation~(\ref{eqn:utilwork}).

Given a set of workers $\mathcal{J}$ and a preference relation $\succeq_j$ for every worker $j \in \mathcal{J}$, a hedonic coalition formation game is formally defined as follows:
\begin{definition}
A hedonic coalition formation game is a coalitional game that is defined by $(\mathcal{J}, \succ)$ where $\mathcal{J}$ and $\succ=\{\succ_1,\cdots,\succ_j,\cdots,\succ_J\}$ represent the set of workers and the preference relation of each worker in $\mathcal{J}$ respectively. In addition, a hedonic coalition formation game fulfils the two important requirements as follows:

\begin{enumerate}
\item The payoff of any worker depends solely on the members of the coalitions to which the worker belongs, and
\item The coalition formed is a result of the preferences of the workers over the set of possible coalitions.
\end{enumerate}
\end{definition}

In the hedonic coalition formation game, based on the preference relation in Equation~(\ref{eqn:utilwork}) and preference function in Equation~(\ref{eqn:function}), the worker $j \in \mathcal{J}$ joins a new coalition that it has not visited before, and if and only if worker~$j$ achieves higher utility in the new coalition. Specifically, the formation of hedonic coalitions is based on switch rule, which determines whether the worker $j$ ($\forall j \in \mathcal{J}$) decides to leave or join a coalition.

\begin{definition}
\label{def:switch}
(Switch Rule) Given a coalitional structure $\Pi=\{S_1,\ldots,S_i,\ldots,S_I\}$, a worker $j$ decides to leave its current coalition $S_i$ and join another coalition $S_{i'} \in \Pi$, where $i \neq i'$, if and only if $S_{i'} \bigcup \{j\}\succ_{j} S_i$. As a result, $\{S_i, S_{i'}\} \rightarrow \{S_i \backslash \{j\}, S_{i'} \bigcup \{j\}\}$.
\end{definition}

The switch rule in hedonic coalition formation games allows any worker $j \in \mathcal{J}$ to leave its current coalition $S_i$ and join another coalition $S_{i'} \in \Pi$, where $i \neq i'$, given that $S_{i'} \bigcup \{j\}$ is strictly preferred over $S_{i}$ based on the defined preference relation. This transforms the current coalitional structure $\Pi$ into a new coalitional structure $\Pi'=\Pi \backslash \{S_i,S_{i'}\} \bigcup \{S_{i}\backslash \{j\}, S_{i'}\bigcup\{j\}\}$. The adoption of switch rule in hedonic coalition formation games reflects the selfish behaviour of the workers since the workers decide to leave and join any coalition based on their preference relations, without taking into account the effect of their actions on other workers.

\begin{proposition}
If worker $j$ performs the switch rule for the $\lambda$-th time, where it leaves its current coalition $S_{i_{\lambda-1}}^{\lambda-1}$ and forms a new coalition $S_{i_{\lambda}}^{\lambda}$, where $i_{\lambda-1} \neq i_{\lambda}$, the new coalition $S_{i_{\lambda}}^{\lambda}$ cannot be the same as any coalition formed in the history set, $h(j)$. In other words, before the update of the history set for the $\lambda$-th time, the new coalition is not in the history set, i.e., $S_{i_{\lambda}}^{\lambda} \notin h(j)$.
\end{proposition}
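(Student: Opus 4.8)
The plan is to argue by contradiction, exploiting the single feature of Equation~(\ref{eqn:function}) that does all the work: the preference function assigns the value $-\infty$ to any coalition already contained in the history set. First I would suppose, for contradiction, that immediately before the $\lambda$-th update of the history set the newly formed coalition satisfies $S_{i_\lambda}^\lambda \in h(j)$. By the second case of Equation~(\ref{eqn:function}), this membership forces $u_j(S_{i_\lambda}^\lambda) = -\infty$.

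Next I would invoke the switch rule of Definition~\ref{def:switch}. By hypothesis worker $j$ performed its $\lambda$-th switch, leaving $S_{i_{\lambda-1}}^{\lambda-1}$ and forming $S_{i_\lambda}^\lambda$, and the switch rule permits this move if and only if $S_{i_\lambda}^\lambda \succ_j S_{i_{\lambda-1}}^{\lambda-1}$. Translating this strict preference through the preference relation $\succ_j$ gives $u_j(S_{i_\lambda}^\lambda) > u_j(S_{i_{\lambda-1}}^{\lambda-1})$. Combining this with the previous paragraph yields $-\infty > u_j(S_{i_{\lambda-1}}^{\lambda-1})$, which is impossible, since $-\infty$ is not strictly greater than any extended-real quantity. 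Crucially, this conclusion holds whether the previous coalition $S_{i_{\lambda-1}}^{\lambda-1}$ carries finite or infinite utility, so no case split is required. The contradiction shows the supposition was false, whence $S_{i_\lambda}^\lambda \notin h(j)$ before the $\lambda$-th update, as claimed.

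Since the statement reduces to unwinding two definitions, I do not expect a genuinely hard step. The only point requiring care is the bookkeeping of time: one must pin down the exact instant `before the $\lambda$-th update' at which the membership $S_{i_\lambda}^\lambda \in h(j)$ is evaluated, so that the $-\infty$ branch of Equation~(\ref{eqn:function}) is the branch actually governing worker $j$'s switch decision. Fixing this timing correctly is what guarantees the penalty term is active precisely when the switch rule is applied, and it is the only place where a careless argument could slip.
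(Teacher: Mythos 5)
Your proof is correct, and it takes a genuinely different — and tighter — route than the paper's own argument. The paper works with the raw utilities of Equation~(\ref{eqn:utilwork}): it supposes the new coalition coincides with a coalition recorded in $h(j)$, observes that identical coalitions yield identical utilities $x_j$, and then appeals to the claim that a switch requires the new coalition to be strictly preferred ``over any of the previous coalitions'' — a condition that is not literally what Definition~\ref{def:switch} states (the switch rule compares only against the \emph{current} coalition), so the paper's argument implicitly needs the extra fact that a worker's utility strictly increases along its switch sequence. Your proof sidesteps this entirely: by evaluating the preference function of Equation~(\ref{eqn:function}) rather than the raw utility, membership in $h(j)$ forces $u_j(S_{i_\lambda}^{\lambda})=-\infty$, which can never be strictly greater than $u_j$ of the current coalition whatever its value, so the switch is impossible in one step, with no comparison to historical utilities and no monotonicity lemma. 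What your route buys is precisely the rigor the paper glosses over — the $-\infty$ branch is the formal device that makes the proposition airtight and case-free — while the paper's route has the minor virtue of showing that even measured in raw utilities a revisit would bring no strict gain. Your closing remark on timing (that $S_{i_\lambda}^{\lambda}\in h(j)$ must be tested before the $\lambda$-th update of the history set) matches the proposition's own phrasing and flags the one genuine bookkeeping hazard.
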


\begin{proof}

Suppose that there are two coalitions $S_{i_{\lambda-1}}^{\lambda-1}$ and $S_{i_{\lambda}}^{\lambda}$ such that $S_{i_{\lambda-1}}^{\lambda-1}=S_{i_\lambda}^{\lambda}$, where coalition $S_{i_{\lambda-1}}^{\lambda-1}$ is found in the history set, $h(j)$. Based on Equation~(\ref{eqn:utilwork}), the utility of worker~$j$ in either of the coalition is the same, i.e., $x_j^{S_{i_{\lambda-1}}^{\lambda-1}}=x_j^{S_{i_{\lambda}}^{\lambda}}$. According to the definition of switch rule in Definition~(\ref{def:switch}), worker $j$ only performs the switch operation if and only if the new coalition is strictly preferred over any of the previous coalitions. In other words, switch operation is only performed if and only if $x_j^{S_{i_\lambda}^{\lambda}}> x_j^{S_{i_{\lambda-1}}^{\lambda-1}}$. Since the newly formed coalition $S_{i_\lambda}^{\lambda}$ does not fulfil the condition of switch rule, the switch operation is not performed. More specifically, the newly formed coalition $S_{i_{\lambda}}^{\lambda}$ cannot be the same as any coalition $S_{i_\lambda}^{\lambda}$ in the history set, $h(j)$.

\end{proof}

In the formation of hedonic coalitions, there exists a stable coalitional structure. There are two types of stabilities of coalitional structure, i.e., Nash-stability and individual-stability~\cite{anna2002stability}. 

\begin{itemize}
\item \emph{Nash-stability:} A coalitional structure $\Pi=\{S_1,\ldots, S_i,\ldots, S_I\}$ is Nash-stable if no worker $j \in \mathcal{J}$ has incentive to leave its current coalition $S_i$ and join another coalition $S_{i'}$ where $i \neq i'$, i.e., $S_i \succ S_{i'} \bigcup \{j\}$, $\forall i' \in \mathcal{I}$. In other words, no worker is able to increase its utility by performing switch rule to change its current coalition.

\item \emph{Individual-stability:} A coalitional structure $\Pi=\{S_1,\ldots, S_i,\ldots, S_I\}$ is individually-stable if there does not exist such that (i) a worker $j$, $\forall j \in \mathcal{J}$ in its current coalition strictly prefers any other coalition, i.e., $S_{i'} \bigcup \{j\}\succ_j S_i$, $\forall i\ \in \mathcal{I}$, and (ii) the formation of a new coalition does not reduce the utilities of the members of the new coalition, i.e.,  $S_{i'} \bigcup \{j\} \succ_{j'} S_{i'}$, $j \neq j'$, $\forall j' \in S_{i'}$.
\end{itemize}

Note that when a coalitional structure is Nash-stable, it is also individually-stable~\cite{anna2002stability} since Nash-stability is a subset of individual-stability.

\begin{proposition}
The final partition $\Pi^*=\{S_1^*,\ldots, S_i^*,\ldots, S_I^*\}$ is a Nash-stable and individually-stable coalitional structure.
\end{proposition}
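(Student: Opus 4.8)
The plan is to prove the statement in three stages: first establish that the switch-rule dynamics terminate after finitely many operations, then identify the terminal partition with $\Pi^*$ and argue it is Nash-stable, and finally invoke the fact that Nash-stability implies individual-stability.

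First I would consider the sequence of coalitional structures $\Pi^0 \to \Pi^1 \to \cdots$ generated by successive applications of the switch rule of Definition~(\ref{def:switch}), starting from any initial partition $\Pi^0$ (for instance the all-singleton partition). By construction, each transition $\Pi^{\lambda-1} \to \Pi^{\lambda}$ is triggered by some worker $j$ moving from its current coalition $S_{i_{\lambda-1}}^{\lambda-1}$ into a strictly preferred coalition $S_{i_\lambda}^{\lambda}$, so that $x_j^{S_{i_\lambda}^{\lambda}} > x_j^{S_{i_{\lambda-1}}^{\lambda-1}}$ holds at every step for the worker that switches.

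The hard part will be proving that this sequence cannot continue indefinitely, i.e., convergence. Here I would rely on the history-set mechanism encoded in the preference function of Equation~(\ref{eqn:function}): since any coalition already recorded in $h(j)$ is assigned preference value $-\infty$, no worker can ever switch into a coalition it has previously occupied, and by Proposition~1 each switch therefore lands worker $j$ in a genuinely new coalition that is then appended to $h(j)$. Because $h(j)$ is contained in the finite collection of subsets of $\mathcal{J}$ that include $j$, each worker can switch only finitely many times; summing over the $J$ workers bounds the total number of switch operations, so the dynamics must terminate at some partition $\Pi^* = \{S_1^*,\ldots,S_I^*\}$. I would emphasize that this no-revisit-plus-finiteness argument, rather than a global potential function, is what forecloses cycling, and that Proposition~1 does the essential work here.

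Once termination is established, I would argue that at $\Pi^*$ no worker can execute any further profitable switch, since otherwise the process would not have halted. By the switch rule this means $S_i^* \succeq_j S_{i'}\cup\{j\}$ for every worker $j$ and every alternative coalition $S_{i'}$ with $i \neq i'$, which is exactly the defining condition of Nash-stability. Individual-stability then follows immediately, since, as noted before the statement, every Nash-stable structure is individually-stable. Hence $\Pi^*$ is both Nash-stable and individually-stable, as claimed.
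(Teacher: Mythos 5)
Your proposal is correct and follows essentially the same route as the paper: termination of the switch dynamics via the history-set/no-revisit mechanism, the observation that a terminal partition admits no profitable switch and is therefore Nash-stable, and the final appeal to the fact that Nash-stability implies individual-stability. The only minor difference is that you count switches per worker (bounding each $h(j)$ by the finite family of coalitions containing $j$), whereas the paper counts at the level of coalitional structures (the generated sequence of partitions never repeats and there are finitely many partitions); both rest on the same mechanism, and yours is if anything slightly more explicit about why termination is forced.
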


\begin{proof}
Given any current coalitional structure $\Pi_{curr}=\{S_1,\ldots,S_i,\ldots, S_I\}$, where $S_i \cap S_{i'}=\emptyset$ for $i \neq i'$, $\bigcup_{i=1}^IS_i=\mathcal{J}$, switch operations are performed for any worker $j \in \mathcal{J}$ to either leave or join a coalition. The current partition $\Pi_{curr}$ is updated when the utility of any worker $j \in \mathcal{J}$ in any coalition $S_i$ is higher by leaving its current coalition $S_i$ and joining another coalition $S_{i'} \in \Pi_{curr}$, for $i\neq i'$. According to Equation~(\ref{eqn:function}) which defines the preference function of each worker, the worker does not visit the coalitions that are contained in its history set. Therefore, the hedonic coalitional formation algorithm generates a sequence of coalitional structures where each coalitional structure has not been visited before. The algorithm will eventually terminate at a final coalitional structure $\Pi^*=\{S_1^*,\ldots, S_i^*,\ldots,S_I^*\}$, where there is no incentive for any worker to change its current coalition. In other words, the utility of each worker $j$, $\forall j \in \mathcal{J}$, is maximized given the final coalitional structure $\Pi^*$.

Suppose we assume that the final coalitional structure $\Pi^*$ is not Nash-stable. This implies that there is a worker $j \in \mathcal{J}$ that has incentive to change its current coalition. As a result, by leaving its current coalition and joining another coalition, the coalitional structure $\Pi^*$ is updated based on the switch rule defined in Definition~\ref{def:switch}. Thus, the coalitional structure is not final, which does not align with our assumption that the final coalitional structure is not Nash-stable. Therefore, the final coalitional structure $\Pi^*$ must be Nash-stable. Since final coalitional structure $\Pi^*$ is Nash-stable, it is also individually-stable.
\end{proof}

\subsection{Hedonic Coalition Formation Algorithm}

The algorithm for the hedonic coalition formation is presented in Algorithm~\ref{algo}. 

The hedonic coalition formation game is based on the switch rule defined in Definition~\ref{def:switch}. The switch operation is illustrated from the perspective of worker $j \in \mathcal{J}$. The worker $j$ decides to leave its current coalition $S_i$ and join another coalition $S_{i'}$ where $i\neq i'$ and $S_{i'} \subseteq \Pi_{curr}$ if and only if the worker $j$ achieves higher utility by joining coalition $S_{i'}$ than that of the current coalition $S_i$. The worker $j$ first compute its utility in the current coalition $S_i$, $x^{S_i}_j$ (line~8). Given the current coalitional structure $\Pi_{curr}$, the worker $j$ evaluates the other coalitions $S_{i'}$, for $i\neq i'$ that it could possibly join (line~9-10). Specifically, the worker $j$ computes the utility that it achieves if it joins another coalition $S_{i'}$, $x^{S_{i'}}_j$ (line~10). If the utility of worker $j$ for joining coalition $S_{i'}$ is higher than that of the current coalition $S_i$ and the coalition $S_{i'}$ is not found in the history set of worker $j$, $h(j)$, the worker $j$ performs switch operation (line~11-16). In particular, the worker $j$ first updates its history set by adding the current coalition $S_i$ into $h(j)$ (line~12). The worker $j$ leaves its current coalition $S_i$ (line~13) and joins the new coalition $S_{i'}$ (line~14). Then, given the new coalition $S_{i'}$, the current coalition and coalitional structure are updated (line~15-16). On the other hand, if the utility of worker for joining coalition $S_{i'}$ is lower than that of the current coalition $S_i$ or the new coalition $S_{i'}$ has been visited before, the worker $j$ does not leave its current coalition $S_i$, thus there is no change in the coalitional structure. For the next iteration, the worker $j$ will consider to join other possible coalition $S_{i'} \in \Pi_{curr}$, for $i\neq i'$. The process is repeated for all workers $j \in \mathcal{J}$. The switch mechanism terminates when there is no more change to the current coalitional structure, $\Pi_{curr}$. In other words, there is no worker $j$ ($\forall j \in \mathcal{J}$) that is able to achieve higher utility by leaving its current coalition and join any other coalition in the current coalition structure, $\Pi_{curr}$. At the end of the switch mechanism, the algorithm returns the final coalitional structure $\Pi^*=\{S_1^*,\ldots,S_i^*,\ldots,S_I^*\}$ that is Nash-stable (line~20). Consequently, the total amount of CPU power that each coalition can be computed, in which the competition between the different cluster heads are discussed in the next section. 

%
%

\begin{algorithm}[t]
\caption{Algorithm for Hedonic Coalition formation of Workers using Switch Rule.}
\label{algo}
\begin{algorithmic}[1]
 \renewcommand{\algorithmicrequire}{\textbf{Input:}}
 \renewcommand{\algorithmicensure}{\textbf{Output:}}
 \REQUIRE Set of workers, $\mathcal{J}=\{1,\ldots, j,\ldots, J\}$, set of cluster heads, $\mathcal{I}=\{1, \ldots,i,\ldots,I\}$
 \ENSURE Final coalitional structure $\Pi^*=\{S_1^*,\ldots,S_i^*,\ldots,S_I^*\}$
 
 \STATE $\Pi^*=\emptyset$
 \STATE Initialize history set for all workers, i.e., $h(j)=\emptyset$, $\forall j \in \mathcal{J}$
 \STATE Given $J$ workers, initialize a coalitional structure $\Pi_{curr}$ where workers are randomly allocated to the $I$ coalitions
 
 \STATE \textbf{\emph{\underline{Switch Rule:}}}
 
 \WHILE {$\Pi_{curr}\neq \Pi^*$}
 	\STATE Update the final coalitional structure such that $\Pi^*=\Pi_{curr}$
	\FOR {\textbf{each} worker $j \in \mathcal{J}$ (worker $j$ is in coalition $S_i \in$ coalitional structure $\Pi_{curr}$)}
		\STATE Compute $x^{S_i}_j$
		\FOR {\textbf{each} possible coalition $S_{i'} \in \Pi_{curr}$, $i \neq i'$ }
			\STATE Compute $x^{S_{i'}}_j$
		\IF {$x^{S_{i'}}_j > x^{S_i}_j$ \AND $S_{i'} \notin h(j)$ }
			\STATE Worker $j$ updates its history set, $h(j)$ by adding the current coalition $S_i$ into $h(j)$
			\STATE Worker $j$ leaves its current coalition, $S_i=S_i \backslash \{j\}$ 
			\STATE Worker $j$ joins the new coalition that increases its utility, $S_{i'}=S_{i'}\bigcup \{j\}$
			\STATE Update current coalition of worker $j$, $S_i \gets S_{i'}$
		\STATE Update current coalitional structure $\Pi_{curr} \gets \Pi_{curr} \backslash \{S_i,S_{i'}\} \bigcup \{S_{i}\backslash \{j\}, S_{i'}\bigcup\{j\}\}$
		\ENDIF
		\ENDFOR	
	\ENDFOR	
\ENDWHILE

\RETURN Final coalitional structure $\Pi^*=\{S_1^*,\ldots,S_i^*,\ldots,S_I^*\}$ that is Nash-stable
\end{algorithmic}
\end{algorithm}

\section{Upper-level All-pay Auction}
\label{sec:upper}

Since the cluster heads, given their coalitions of workers formed, may have several computation tasks to complete, they only allocate a fraction of their CPU power to the CDC tasks. Hence, in order to incentivize the cluster heads to allocate more CPU power for the allocated CDC subtasks, we present the design of an all-pay auction in this section. In this all-pay auction, the master is the auctioneer whereas the cluster heads are the bidders. The bid of a cluster head is represented by the CPU power that it allocates for its CDC subtask, which, for example, can be measured from the computation latency incurred for the CDC subtask. All $I$ cluster heads, i.e., bidders, pay their bids regardless of whether they win or lose the auction. We first discuss the utilities of both the master and the cluster heads. Then, we present the design of an all-pay auction.

\subsection{Utility of the Master}

Given that the master only needs the computed results from $K$ cluster heads to reconstruct the final result, the master offers $K$ rewards, represented by the set $\mathcal{K}=\{1,\ldots, k,\ldots, K\}$ where $K \leq I$. Specifically, there are $K$ rewards for which $I$ cluster heads compete. The effect of different reward structures is discussed later in details in Section~\ref{subsec:reward}. Since only $K$ rewards are offered, $I-K$ cluster heads do not receive any reward from the master, even though they perform the matrix multiplication computations given the allocated submatrices. The all-pay auction is designed such that the cluster heads are incentivized to allocate their CPU power, even if there is a possibility that they may not win any reward. 

The size of reward $k$ is represented by $M_k$. The cluster head that allocates larger CPU power is offered larger reward. In particular, the cluster head that allocates the largest amount of CPU power receives a reward of $M_1$, the cluster head with the second largest allocation receives reward $M_2$ and the cluster head with the $k$-th largest allocation of CPU power is offered reward $M_k$. If two or more cluster heads allocate the same amount of CPU power to perform the CDC tasks, ties will be randomly broken. In other words, if both cluster heads are ranked $k$, one is ranked $k$ and the other is ranked $k+1$. Hence, without loss of generality, $M_1\geq M_2 \geq \cdots \geq M_K > 0$. The total amount of reward offered by the master is denoted by $\sigma$, i.e., $\sigma=\sum_{k=1}^{K}M_{k}$. The master broadcasts the information of size of total reward and the structure of rewards to the workers. The aim of the master is to share the entire fixed reward to maximize the CPU power allocated by the cluster heads. 

As such, the expected utility of the master, $\pi$ is expressed as follows:
\begin{equation}
\label{profit}
\pi=\mathbb{E}[\phi(\tau_{1:I}+\tau_{2:I}+\cdots+\tau_{K:I})-\sigma],
\end{equation}
where $\phi$ is the unit worth of CPU power to the master and $\tau_{k:I}$ represents the order statistics of the cluster head's CPU power allocation. Specifically, $\tau_{1:I}$ and $\tau_{k:I}$ denote the highest and $k$-th highest CPU power allocation respectively among $I$ cluster heads.

\subsection{Utility of the Cluster Head}

To perform the local computations on the allocated CDC subtask, each cluster head $i$ consumes computational energy, $e_i$, which is defined as:
\begin{equation}
e_i=\kappa a_{i}(\tau_i)^2,
\end{equation} 
where $\kappa$ is the effective switch coefficient that depends on the chip architecture \cite{zhang2018energy}, $a_{i}$ is the total number of CPU cycles required to complete the allocated computation subtask and $\tau_i$ is the CPU power allocated by cluster head $i$ for the CDC subtask. In other words, $v(S_i)-\tau_i$ is the amount of CPU power allocated by cluster head $i$ for other computation tasks. By using the polynomial codes, the computation task is evenly partitioned and distributed among all cluster heads. As a result, the total number of CPU cycles that are needed to complete the allocated computation tasks is the same for all cluster heads, i.e., $a_{i}=a, \; \forall i \in \mathcal{I}$. The unit cost of computational energy incurred by cluster head $i$, $\forall i \in\mathcal{I}$, is denoted by $\theta^p$, where the unit cost of computational energy is the same for all cluster heads. Besides, each cluster head $i$ also requires communication energy $c_i$ to communicate with the master. Similarly, the unit cost of communication energy is the same for all cluster head where the unit cost of communication energy incurred by cluster head $i$, $\forall i \in \mathcal{I}$, is denoted by $\theta^c$.


Each cluster head $i$ has a valuation $v_{i}$ for the total reward $\sigma$. For example, in practical scenarios, the valuations for the total reward can be determined by how much the cluster heads can benefit from the reward, which is a user preference parameter. In particular, the cluster heads value the reward more if they need the reward for some important purposes, e.g. upgrading of their hardware components. The cluster heads' valuations, $v_i$,  $\forall i \in \mathcal{I}$, are independently drawn from $v_i \in [\underline{v},\bar{v}]$ such that $\underline{v}$ and $\bar{v}$ are strictly positive given $F(v)$, where $F(v)$ is the cumulative distribution function (CDF) of $v$. The total cost of cluster head $i$ is represented by $\theta^{p}e_{i}+\theta^{c}c_{i}$. As a result, the utility of cluster head $i$ for winning reward $M_{k}$, $\forall k \in \mathcal{K}$, is expressed as:
\begin{equation}
\label{eqn:payoff}
\alpha_{i}=
	\begin{cases}
   	v_{i}M_{k}-\theta^{p}e_{i}-\theta^{c}c_{i}, & \text{if cluster head $i$ wins $M_{k}$}, \\
   	-\theta e_{i}, & \text{otherwise}.
 	\end{cases}
\end{equation}

\subsection{Design of an All-pay Auction}

Each cluster head $i$ knows its own valuation, $v_{i}$ but does not know the valuation of any other cluster head, $i'\neq i$. This establishes a one-dimensional incomplete information setting. In addition, if each cluster head has different unit costs of computational and communication energy which are only known to itself, we consider the three-dimensional incomplete information setting. The dimension of private information can be reduced following the procedure in~\cite{yoon2012optimal}. In this work, we consider a one-dimensional incomplete information setting where the unit costs of computational and communication energy are the same for all cluster heads but the cluster heads' valuations are heterogeneous and private. 

Given the utility of cluster head $i$, $\alpha_i$ in Equation~(\ref{eqn:payoff}), the objective of cluster head $i$ to maximize its expected utility, $u_{i}$, is defined as follows:
\begin{equation}
\label{eqn:utility}
\max_{\tau_i}u_{i}=v_{i}\sum_{k=1}^Kp_{i}^k{M_{k}}-\theta^{p}\kappa a(\tau_{i})^2-\theta^{c}c_i,
\end{equation}
where $p_{i}^k$ is the winning probability of reward $M_k$ by cluster head~$i$.

Although the cluster head does not know exactly the valuations of other cluster heads, it knows the distribution of the other cluster heads' valuations based on past interactions, which is a common knowledge to all cluster heads and the master. In our model, we consider that the valuations of all cluster heads are drawn from the same distribution, which constitutes a symmetric Bayesian game where the prior is the distribution of the cluster heads' valuations.

\begin{definition} \cite{tie2014optimal} A pure-strategy Bayesian Nash equilibrium is a strategy profile $\mathbf{\tau}^*=(\tau_1^*,\ldots,\tau_i^*,\ldots,\tau_I^*)$ that satisfies
$$u_{i}(\tau_{i}^*,\mathbf{\tau}_{-i}^*) \geq u_{i}(\tau_{i},\mathbf{\tau}_{-i}^*), \; \forall i \in \mathcal{I}. $$
\end{definition}
The subscript $-i$ represents the index of other cluster heads other than cluster head $i$. Specifically, $\mathbf{\tau}_{-i}^*=(\tau_1^*,\tau_2^*,\ldots,\tau_{i-1}^*,\tau_{i+1}^*,\ldots,\tau_I^*)$ represents the equilibrium CPU power allocations of all other cluster heads other than CPU power allocation of cluster head $i$. At the Bayesian Nash equilibrium, given the belief of cluster head $i$, $\forall i \in \mathcal{I}$ about the valuations and that the CPU power allocated by other cluster heads, $i'$ where $i \neq i'$ are at equilibrium, $\mathbf{\tau}^*_{i'}$, cluster head $i$ aims to maximize its expected utility. 

\begin{proposition}Under incomplete information setting, the all-pay auction admits a pure-strategy Bayesian Nash equilibrium that is strictly monotonic where the bid of a cluster head strictly increases in its valuation.
\end{proposition}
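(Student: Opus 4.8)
The plan is to construct the equilibrium explicitly by positing a symmetric, strictly increasing bidding rule and then verifying that it is indeed a best response. First I would assume all cluster heads adopt a common strictly increasing (monotonic) function $\beta$ mapping a valuation $v$ to a CPU power allocation $\tau=\beta(v)$. Because $\beta$ preserves order, the ranking of allocated CPU power coincides with the ranking of valuations, so the cluster head holding the $k$-th highest valuation makes the $k$-th largest allocation and is awarded $M_k$. This reduces ``winning reward $M_k$'' to ``holding the $k$-th highest valuation among $I$ independent draws from $F$,'' converting the auction into a problem about order statistics.

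Next I would compute the winning probabilities. A cluster head of valuation $v$ is ranked $k$-th exactly when $k-1$ of the other $I-1$ cluster heads draw a higher valuation, so $p_i^k=\binom{I-1}{k-1}(1-F(v))^{k-1}F(v)^{I-k}$. Collecting the prize terms, I would define the expected-reward function $P(v)=\sum_{k=1}^{K}\binom{I-1}{k-1}(1-F(v))^{k-1}F(v)^{I-k}M_k$, so that a cluster head with true valuation $v$ that allocates $\beta(\hat{v})$ (i.e.\ mimics type $\hat{v}$) obtains, from Equation~(\ref{eqn:utility}), the expected utility $u(v,\hat{v})=vP(\hat{v})-\theta^{p}\kappa a\,\beta(\hat{v})^2-\theta^{c}c_i$, where the communication term is constant in the allocation.

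I would then impose the equilibrium condition that truthful mimicking $\hat{v}=v$ is optimal. The first-order condition $\partial u/\partial\hat{v}|_{\hat{v}=v}=0$ gives $vP'(v)=2\theta^{p}\kappa a\,\beta(v)\beta'(v)=\theta^{p}\kappa a\,\tfrac{d}{dv}\big[\beta(v)^2\big]$, i.e.\ the ordinary differential equation $\tfrac{d}{dv}\big[\beta(v)^2\big]=\frac{vP'(v)}{\theta^{p}\kappa a}$. Integrating with the boundary condition $\beta(\underline{v})=0$ — justified because the lowest type is ranked last almost surely, so any positive bid is pure wasted cost and optimality forces a zero bid — yields the closed form $\beta(v)=\Big(\frac{1}{\theta^{p}\kappa a}\int_{\underline{v}}^{v}t\,P'(t)\,dt\Big)^{1/2}$.

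To finish, I would verify both strict monotonicity and global optimality, and these reduce to a single fact: $P'(v)>0$. For monotonicity, positivity of $P'$ makes the integrand $t\,P'(t)$ positive, so $\beta$ is strictly increasing, consistent with the monotonicity assumed at the outset. For global optimality, I would use the observation that the ODE holds identically in its argument, so $2\theta^{p}\kappa a\,\beta(\hat{v})\beta'(\hat{v})=\hat{v}P'(\hat{v})$ and therefore $\partial u/\partial\hat{v}=(v-\hat{v})P'(\hat{v})$; since $P'>0$, this derivative is positive for $\hat{v}<v$ and negative for $\hat{v}>v$, so $\hat{v}=v$ is the unique (single-crossing) maximizer and no deviation to any feasible $\tau$ is profitable. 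The main obstacle is thus establishing $P'(v)>0$ rigorously: a higher valuation first-order stochastically improves a bidder's rank, and since the prizes satisfy $M_1\ge\cdots\ge M_K>0$, the expected reward strictly increases; making this stochastic-dominance argument clean across all $K$ decreasing prize terms simultaneously — rather than for the single-prize case — is the delicate part, after which strict monotonicity and the existence of the pure-strategy Bayesian Nash equilibrium follow together.
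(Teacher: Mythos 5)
Your proposal is correct and follows the same constructive route as the paper: posit a symmetric, strictly increasing strategy $\beta$, reduce winning probabilities to order statistics of the valuations, impose the first-order condition at truthful mimicry, and integrate with the boundary condition $\beta(\underline{v})=0$; your closed form $\beta(v)=\bigl(\tfrac{1}{\theta^{p}\kappa a}\int_{\underline{v}}^{v}tP'(t)\,dt\bigr)^{1/2}$ is exactly the paper's $\tau_{i}^*=c^{-1}\bigl(\sum_{k=1}^{I-1}(M_{k}-M_{k+1})\int_{\underline{v}}^{v_i}v f_{k:I-1}(v)\,dv\bigr)$ specialized to the quadratic cost. The notable difference is that the paper, strictly speaking, never proves this proposition: it is stated bare (no proof environment), and the derivation that follows \emph{assumes} monotonicity in order to run the first-order-condition calculation, stopping at the equilibrium formula. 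Your two verification steps --- the single-crossing identity $\partial u/\partial\hat{v}=(v-\hat{v})P'(\hat{v})$ giving global optimality of $\hat{v}=v$, and the positivity requirement $P'(v)>0$ --- are precisely what the paper omits, so your plan is the more complete argument. Moreover, the ``delicate part'' you flag is less delicate than you fear: by summation by parts, $P(v)=\sum_{k=1}^{I-1}(M_k-M_{k+1})F_{k:I-1}(v)$ (using $M_{K+1}=\cdots=M_I=0$, $F_{0:I-1}\equiv 0$), each summand is non-decreasing in $v$ since each $F_{k:I-1}$ is a CDF and $M_k\geq M_{k+1}$, and the $k=K$ term is strictly increasing on $(\underline{v},\bar{v})$ because $M_K-M_{K+1}=M_K>0$ and $f_{K:I-1}>0$ there; this telescoping is exactly the manipulation the paper performs inside its integration step, so no separate stochastic-dominance machinery is needed. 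One small point to tidy: your single-crossing argument only covers deviations of the form $\tau=\beta(\hat{v})$; bids above $\beta(\bar{v})$ must be ruled out separately (they win $M_1$ with the same probability as bidding $\beta(\bar{v})$ but cost strictly more), a case the paper does dispose of in passing.
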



Since the equilibrium CPU power allocation of cluster head~$i$, which is represented by $\tau_{i}^*$, is a strictly monotonically increasing function of its valuation $v_{i}^*$, we express the equilibrium strategy of cluster head $i$ as a function represented by $\beta(\cdot)$, i.e., $\tau_{i}^*=\beta_{i}(v_{i})$. Given the strict monotonicity, the inverse function also exists where $v_i(\cdot)=\beta_i^{-1}(\cdot)$ and it is an increasing function. Due to the incomplete information setting, the objective of cluster head $i$ to maximize its expected utility in Equation~(\ref{eqn:utility}) can be expressed as follows:
\begin{equation}
\max_{\tau_i}u_{i}=v_{i}\sum_{k=1}^Kp_{i}^k(\tau_{i},\beta_{i'}(v_{i'})){M_{k}}-c(\beta(v_{i})),
\end{equation}
where the cost of cluster head $i$ is represented by the function $c(\cdot)=\theta^{p}\kappa a(\beta(v_{i}))^2+\theta^{c}c_i$.

Since the cluster heads are symmetric, i.e., the valuations of cluster heads are drawn from the same distribution, the symmetric equilibrium strategy for each cluster head $i$, $\forall i \in \mathcal{I}$ can be derived. We first assume that there are $I$ rewards, where $M_{1}\geq M_{2}\geq \cdots \geq M_{K}>M_{K+1}=M_{K+2}=\cdots=M_{I}=0$. The valuations of the cluster heads, $v_1,\ldots, v_{i},\ldots, v_{I}$ are ranked and represented by its order statistics, which are expressed as $v_{1:I}\geq v_{2:I} \geq\cdots\geq v_{I:I}$. In particular, $v_{k:I}$ represents the $k$-th highest valuation among the $I$ valuations which are drawn from a common distribution $F(v)$. Given the order statistics of the cluster heads' valuations, $\forall i \in \mathcal{I}$, the corresponding cumulative distribution function and probability density function are represented by $F_{k:I}$ and $f_{k:I}$ respectively. Specifically, the cumulative distribution function $F_{k:I}(v)$ for the $k$-th order statistics in sample of size $I$ is expressed as follows:

\begin{equation}
F_{k:I}(v)=\sum_{r=0}^{k-1}F(v)^{I-r}[1-F(v)]^r.
\end{equation}

The corresponding probability density function $f_{k:I}(v)$ for $k$-th order statistics in sample of size $I$ is expressed as follows:

\begin{equation}
f_{k:I}(v)=\frac{I!}{(k-1)!(I-k)!}F(v)^{(I-k)}[1-F(v)]^{k-1}f(v).
\end{equation}

Similarly, when dealing with the valuations of all cluster heads, other than that of cluster head $i$, the order statistic is represented by $v_{k:I-1}$, which represents the $k$-th highest valuation among the $I-1$ valuations. The corresponding cumulative distribution function and probability density function are represented by $F_{k:I-1}$ and $f_{k:I-1}$ respectively.

Given that other cluster heads $i'$, where $i' \neq i$, follow a symmetric, increasing and differentiable equilibrium strategy $\beta(\cdot)$, cluster head $i$ will never choose to allocate a CPU power greater than the equilibrium strategy given the highest valuation. In other words, cluster head $i$ will never allocate $\tau_i>\beta(\bar{v})$. Besides, the optimal strategy of the cluster head with lowest valuation $\underline{v}$ is not to allocate any CPU power. On one hand, when the number of rewards offered is smaller than the number of cluster heads, i.e., $K<I$, the cluster head with lowest valuation $\underline{v}$ will not win any reward. On the other hand, when the number of rewards offered is larger than or equal the number of cluster heads, i.e., $K\geq I$, the cluster head with lowest valuation $\underline{v}$ will win a reward without allocating any CPU power. Hence, $u_i({\underline{v}})=0$. With this, the expected utility of cluster head $i$ with valuation $v_{i}$ and CPU power allocation $\tau_i=\beta(v_i)$ is expressed as follows:
\begin{equation}
\label{eqn:utilityprob}
u_{i}=v_{i}\sum_{k=1}^I[F_{k:I-1}(v_i)-F_{k-1:I-1}(v_i)]{M_{k}}-c(\beta(v_i)),
\end{equation}
since $M_{k+1}=\cdots=M_{I-1}=M_{I}=0$, $F_{0:I-1}(\tau_i)\equiv0$ and $F_{I:I-1}(\tau_i)\equiv1$. 

By differentiating Equation~(\ref{eqn:utilityprob}) with respect to the variable $w_i$ and equating the result to zero, we obtain the following:
\begin{equation}
\label{differentiate}
0=v_{i}\sum_{k=1}^I[f_{k:I-1}(v_i)-f_{k-1:I-1}(v_i)]{M_{k}}-c'(\beta(v_i))\beta'(v_i).
\end{equation}

When maximized, the marginal value of the reward is equivalent to the marginal cost of the CPU power. Since we have the differentiated function $c'(\cdot)$, the function $c(\cdot)$ can be found by using the integral of Equation~(\ref{differentiate}). At equilibrium, when the expected utility of cluster head $i$, $\forall i \in \mathcal{I}$, is maximized, we have the following:
\begin{equation}
\begin{split}
c(\beta(v_i))&=\sum_{k=1}^{I}M_{k}\int_{\underline{v}}^{v_i}v_{i}[f_{k:I-1}(v_{i})-f_{k-1:I-1}(v_{i})]dv_{i}\\
&=\sum_{k=1}^{I-1}(M_{k}-M_{k+1})\int_{\underline{v}}^{v_i}v_{i}f_{k:I-1}(v_{i})dv_{i}.
\end{split}
\end{equation}

Thus the equilibrium strategy for cluster head $i$ with valuation $v_{i}$, $\forall i \in \mathcal{I}$, is expressed as:
\begin{equation}
\tau_{i}^*=\beta(v_{i})=c^{-1}\left(\sum_{k=1}^{I-1}(M_{k}-M_{k+1})\int_{\underline{v}}^{v_i}v_{i}f_{k:I-1}(v_{i})dv_{i} \right).
\end{equation}

Given the equilibrium strategy of cluster head $i$, $\forall i \in \mathcal{I}$, the master aims to maximize its expected utility, $\pi$. By using the polynomial codes, the master is able to reconstruct the final result by using the computed results from $K$ cluster heads. Since the master shares the fixed reward $\sigma$ completely, the maximization problem in Equation~(\ref{profit}) is equivalent to maximizing the allocation of CPU power, which is expressed as follows: 
\begin{multline}
\pi=\mathbb{E}[\beta(v_{1:I})+\beta(v_{2:I})+\cdots+\beta(v_{K:I})]\\
\shoveleft{=\sum_{i=1}^K\int_{\underline{v}}^{\bar{v}}\beta(v)dF_{i:I}(v)}\\
\shoveleft{=K\int_{\underline{v}}^{\bar{v}}\beta(v)dF(v)}\\
\shoveleft{=K\int_{\underline{v}}^{\bar{v}}c^{-1}\left(\sum_{k=1}^{I-1}(M_{k}-M_{k+1})\int_{\underline{v}}^{v}vf_{k:I-1}(v)dv \right)dF(v)}\\
\shoveleft{=K\int_{\underline{v}}^{\bar{v}}c^{-1}(\sum_{k=1}^{I}M_{k}\int_{\underline{v}}^{v}v[f_{k:I-1}(v)}\\
{ -f_{k-1:I-1}(v)dv])dF(v)}.
\end{multline}

Since the equilibrium strategy of cluster head $i$, $\forall i \in \mathcal{I}$, is affected by the reward structure, the master needs to determine the structure of the rewards such that it maximizes the CPU power allocation of the cluster heads, thereby maximizing its own utility, $\pi$.

\subsection{Reward Structure}
\label{subsec:reward}

Given that the master shares the total amount of the reward,~$\sigma$, the design of the optimal reward sequence is important to maximize the CPU power allocation of the cluster heads since the equilibrium strategies of the cluster heads depend on the differences between consecutive rewards.

The master needs to first decide whether to allocate the total amount of reward, $\sigma$ to only one winner, i.e., winner-take-all reward structure, or to split the reward into several smaller rewards.

\begin{proposition}

Given that the cost functions are convex, it is not optimal to offer only one reward where $M_1=\sigma$ and $ M_2=\cdots=M_K=\cdots=M_I=0$ since $\frac{\partial \pi}{\partial M_{k-1}}-\frac{\partial \pi}{\partial M_{k}} <0$, for $k=2,\ldots,I$. In particular, if $\frac{\partial \pi}{\partial M_{1}}-\frac{\partial \pi}{\partial M_{2}} <0$, it is not optimal to offer only a reward. 
\end{proposition}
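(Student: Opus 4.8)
The plan is to treat the master's utility $\pi$ as a function of the reward vector $(M_1,\ldots,M_I)$ under the budget constraint $\sum_{k=1}^I M_k=\sigma$, and to show that the winner-take-all corner $(\sigma,0,\ldots,0)$ cannot maximize $\pi$ because a feasible perturbation strictly increases it. Writing $\beta(v)=c^{-1}(\Phi(v))$ with $\Phi(v)=\sum_{k=1}^{I-1}(M_k-M_{k+1})G_k(v)$ and $G_k(v)=\int_{\underline v}^{v} x\,f_{k:I-1}(x)\,dx$, the closed form derived above reads $\pi=K\int_{\underline v}^{\bar v} c^{-1}(\Phi(v))\,dF(v)$, so each $M_k$ affects $\pi$ only through the consecutive differences $M_{k-1}-M_k$ and $M_k-M_{k+1}$.

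First I would differentiate through the chain rule. Since $M_k$ enters $\Phi$ with coefficient $+G_k$ (from $M_k-M_{k+1}$) and $-G_{k-1}$ (from $M_{k-1}-M_k$), we get $\partial\Phi/\partial M_k=G_k(v)-G_{k-1}(v)$ with the convention $G_0\equiv 0$, hence $\partial\pi/\partial M_k=K\int_{\underline v}^{\bar v}(c^{-1})'(\Phi(v))\,[G_k(v)-G_{k-1}(v)]\,dF(v)$. Forming the target quantity yields $\partial\pi/\partial M_{k-1}-\partial\pi/\partial M_k=K\int (c^{-1})'(\Phi)\,[2G_{k-1}-G_{k-2}-G_k]\,dF$; for the decisive case $k=2$ this reduces to $K\int (c^{-1})'(\Phi)\,[2G_1-G_2]\,dF$, and at the winner-take-all profile the argument collapses to $\Phi(v)=\sigma G_1(v)$, turning $(c^{-1})'(\sigma G_1(v))$ into a strictly positive weight.

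The sign must then come from two ingredients. The convexity of $c$ makes $c^{-1}$ concave and increasing, so $(c^{-1})'>0$ and is nonincreasing; this concavity is exactly the ``diminishing returns to concentrating the prize'' that drives the Moldovanu--Sela-type conclusion. The second ingredient is the stochastic ordering of order statistics, $v_{1:I-1}\succeq_{st}\cdots\succeq_{st}v_{I-1:I-1}$, together with the nesting of the truncated moments $G_k(v)$, which I would use to sign the bracketed second-difference term after weighting by the decreasing factor $(c^{-1})'$. Concluding is then immediate: a negative difference means the marginal return of a lower-ranked prize exceeds that of the top prize, so transferring an infinitesimal amount of budget from $M_1$ to $M_2$ (which preserves $\sum M_k=\sigma$ and the ordering $M_1\ge M_2$) strictly raises $\pi$, so $(\sigma,0,\ldots,0)$ is not optimal.

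I expect the genuine obstacle to be signing $\int (c^{-1})'(\Phi)\,[2G_{k-1}-G_{k-2}-G_k]\,dF$. The bracketed term has no fixed pointwise sign: near $\bar v$ the ordering of the expected order statistics (e.g.\ $G_1(\bar v)\ge G_2(\bar v)$) pushes it one way, while over intermediate types the heavier low-value mass of the lower-ranked statistic pushes it the other. Hence the argument cannot be purely pointwise; it must exploit the \emph{concavity} of $c^{-1}$, i.e.\ the fact that the weight $(c^{-1})'(\sigma G_1(v))$ is largest precisely where $G_1(v)$ is small, to tilt the weighted integral negative. Making this tilt rigorous is where convexity of the cost is indispensable, and it is the only step requiring a careful estimate rather than routine algebra.
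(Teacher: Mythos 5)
Your setup coincides with the paper's own proof: you obtain the same closed form $\pi=K\int_{\underline{v}}^{\bar{v}} c^{-1}(\Phi(v))\,dF(v)$, the same per-prize derivative $\partial\pi/\partial M_k=K\int (c^{-1})'(\Phi)\,[G_k-G_{k-1}]\,dF$ with $G_k(v)=\int_{\underline{v}}^{v}x f_{k:I-1}(x)\,dx$, and the same decisive expression $\frac{\partial\pi}{\partial M_1}-\frac{\partial\pi}{\partial M_2}=K\int (c^{-1})'(\sigma G_1(v))\,\bigl[2G_1(v)-G_2(v)\bigr]\,dF(v)$ at the winner-take-all corner. Up to that point the two arguments are identical. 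But your proposal stops exactly where the proof has to happen: you never establish that this weighted integral is negative. You say yourself that the bracket $2G_1-G_2$ has no fixed pointwise sign and that the conclusion must come from a ``tilt'' induced by the concavity of $c^{-1}$, and that making this rigorous is the one step requiring a careful estimate --- but that estimate is absent. As written, this is a correct reduction plus an honest description of the remaining difficulty, not a proof of the proposition.

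The gap is not cosmetic, because the mechanism you appeal to is insufficient as stated. Knowing only that $(c^{-1})'$ is positive and nonincreasing does not sign the integral: if $c$ were close to linear, $(c^{-1})'$ would be close to constant, and the weighted integral would inherit the sign of $\int_{\underline{v}}^{\bar{v}}[2G_1-G_2]\,dF$, which is \emph{positive} (e.g., with $F(v)=v$ on $[0,1]$ and $I=3$ one gets $2G_1(v)-G_2(v)=v^2(2v-1)$ and $\int_0^1 v^2(2v-1)\,dv=1/6>0$); in that regime the difference of partials is positive and winner-take-all is locally better, consistent with the classical result that a single prize is optimal under linear costs. So a quantitative use of strict convexity (or of the specific quadratic form of $c$ in this paper) is unavoidable; monotonicity of the weight alone cannot close the argument. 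For comparison, the paper closes this step by (i) differentiating $2G_1-G_2$ in $v$ to establish a single crossing from negative to positive, (ii) asserting the unweighted integral is positive, and (iii) invoking sign and curvature claims on $(c^{-1})'$ --- claims which are themselves stated with inverted signs ($(c^{-1})'<0$ and $(c^{-1})''\geq 0$ cannot hold for an increasing convex cost, whose inverse is increasing and concave). Your instinct that this is the delicate step is therefore well founded, and the paper's own handling of it is loose; but your proposal does not fill the gap either --- it only points at it.
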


\begin{proof}

Following the procedure in~\cite{yoon2012optimal}, we show that it is not optimal to offer a single reward given the cost functions of the cluster heads are convex. 
\begin{multline*}
\frac{\partial \pi}{\partial M_{1}}-\frac{\partial \pi}{\partial M_{2}}=K\int_{\underline{v}}^{\bar{v}}(c^{-1})'(\int_{\underline{v}}^{v}vf_{1:I-1}(v)dv)\\
\times \left\{2\int_{\underline{v}}^{v}vf_{1:I-1}(v)dv-\int_{\underline{v}}^{v}vf_{2:I-1}(v)dv\right \}dF(v).
\end{multline*}

\begin{multline*}
\frac{\partial}{\partial v}\left( 2\int_{\underline{v}}^{v}vf_{1:I-1}(v)dv-\int_{\underline{v}}^{v}vf_{2:I-1}(v)dv\right )\\
\shoveleft{=v \{2\frac{(I-1)!}{(I-2)!}F(v)^{I-2}f(v) -\frac{(I-1)!}{(I-3)!}F(v)^{I-3}}\\
\shoveright{\times[1-F(v)]f(v)\}}\\
\shoveleft{=vf(v)F(v)^{I-3}(2(I-1)F(v)-(I-1)(I-2)[1-F(v)])}.
\end{multline*}

Let $x=F(v)$, the expression above is simplified to:
\begin{multline*}
\frac{\partial}{\partial v}\left( 2\int_{\underline{v}}^{v}vf_{1:I-1}(v)dv-\int_{\underline{v}}^{v}vf_{2:I-1}(v)dv\right )\\
=vf(v)x^{I-3}(2(I-1)x-(I-1)(I-2)[1-x]).
\end{multline*}

When $x=0$, $\frac{\partial}{\partial v}(\cdot)=vf(v)x^{I-3}[-(I-1)(I-2)]<0$. When $x=1$, $\frac{\partial}{\partial v}(\cdot)=vf(v)x^{I-3}2(I-1)>0$. As a result, there is $\hat{x}=F(\hat{v})$ with $\hat{v}\in (\underline{v},\bar{v})$ such that
\begin{equation*}
 \frac{\partial}{\partial v}\left( 2\int_{\underline{v}}^{v}vf_{1:I-1}(v)dv-\int_{\underline{v}}^{v}vf_{2:I-1}(v)dv\right )>0,
\end{equation*}
if and only if $v>\hat{v}$. As such, this implies that there is $v^*\in (\underline{v},\bar{v})$ such that
\begin{equation*}
2\int_{\underline{v}}^{v}vf_{1:I-1}(v)dv-\int_{\underline{v}}^{v}vf_{2:I-1}(v)dv>0,
\end{equation*}
if and only if $v>\hat{v^*}$.
Given that 
\begin{equation*}
\int_{\underline{v}}^{\bar{v}}2\int_{\underline{v}}^{v}vf_{1:I-1}(v)dv-\int_{\underline{v}}^{v}vf_{2:I-1}(v)dvdF(v)>0,
\end{equation*}
and $(c^{-1})'(\cdot)<0$ and $(c^{-1})''(\cdot)\geq0$ due to the convexity of the cost function, $\frac{\partial \pi}{\partial M_{1}}-\frac{\partial \pi}{\partial M_{2}}<0$. Hence, it is not optimal to allocate only one reward to the cluster head which allocates the largest amount of CPU power, where $M_1=\sigma$ and $M_2=\cdots=M_I=0$. Note that the similar procedure can be used to proof for the general case of $\frac{\partial \pi}{\partial M_{1}}-\frac{\partial \pi}{\partial M_{k}}<0$ for $k=2,\ldots, I-1$.
\end{proof} 

Since the winner-take-all reward structure is not optimal, the master is better off offering multiple rewards. Given that $K$ rewards are offered, the master can consider several reward sequences such as (i) homogeneous reward sequence, (ii) arithmetic reward sequence and (iii) geometric reward sequence. Specifically, the reward sequence is expressed as follows:
\begin{itemize}
\item Homogeneous reward sequence: $M_{k}=M_{k+1}$,
\item Arithmetic reward sequence: $M_{k}-M_{k+1}=\gamma$, $\gamma>0$,
\item Geometric reward sequence: $M_{k+1}=\eta M_{k}$, $0\leq\eta\leq 1$,
\end{itemize}
where $\gamma$ and $\eta$ are constants.


\section{Simulation Results}
\label{sec:simulate}

In this section, we evaluate the two-level game theoretic approach. We first analyze the hedonic coalition formation game that maximizes the utilities of the workers, followed by the all-pay auction. In particular, we evaluate the behaviour of the cluster heads in allocating their CPU power for the CDC subtasks. Table~\ref{tab:simulation} summarizes the simulation parameter values. 

We consider a nomalized total amount of reward $\sigma$ of $1$, i.e., $\sigma=\sum_{k=1}^{K}M_{k}=1$. We also set $m=n=2$ (see ``Task Allocation'' step in Section~\ref{sec:system}) and assume that the cluster heads are able to store equal size of the input matrices, $\mathbf{A}$ and $\mathbf{B}$ such that $m=n=2$.

\begin{table}
\caption{System Simulation Parameter Values.} 
\label{tab:simulation}
\centering
\renewcommand\arraystretch{1.5}
\begin{tabularx}{8.7cm}{XlX}
\hline \hline
\textbf{Parameter}& \textbf{Values}\\ [0.5ex]
\hline 
CPU power of worker $j$, $z_j$ & $[100,450]$\\
CPU power of cluster head $i$, $z_i$ & $[750,1750]$\\
Communication cost between worker $j$ and cluster head $i$, $\mu_{ij}$ & 2\\
Unit cost of computational energy, $\theta^{p}$ & 1\\
Unit cost of communication energy, $\theta^{c}$ & 1\\
Communication energy for required by cluster head $i$, $c_i$ & $5$\\
Effective switch coefficient, $\kappa$ \cite{hao2018energy} & $10^{-25}$\\
Total number of CPU cycles required, $a$ & $5\times 10^{9}$\\
Valuation of cluster head $i$, $v_{i}$ & $\sim U[0,1]$\\

\hline 
\end{tabularx}
\end{table}

\subsection{Lower-level Hedonic Coalition Formation}

\begin{table}[!htb]
  
    \begin{minipage}{.45\linewidth}
      \caption{\footnotesize{Simulation Parameter Values of the Cluster Heads.} }
      \label{clusterhead}
      \centering
      \renewcommand\arraystretch{1.5}
        \begin{tabularx}{\columnwidth}{>{\centering \arraybackslash \hsize=0.5\hsize}X >{\centering \arraybackslash \hsize=0.4\hsize}X >{\centering \arraybackslash \hsize=0.4\hsize}X} 
        \hline\hline
        Cluster Head (CH) ID& CPU Power (W) & Reward\\ [0.5ex]
        \hline
         CH 1 & 750 & 100\\ 
	CH 2 & 1000 & 90 \\
	CH 3 & 1250 & 80 \\ 
	CH 4 & 1500 & 70 \\ 
	CH 5 & 1750 & 60 \\
	\hline 
        \end{tabularx}
    \end{minipage}%
    \hspace{2em}
    \begin{minipage}{.45\linewidth}
      \centering
        \caption{\footnotesize{Simulation Parameter Values of the Workers.}} 
        \label{accesspoint}
        \renewcommand\arraystretch{1.5}
        \begin{tabularx}{\columnwidth}{>{\centering \arraybackslash \hsize=0.6\hsize}X >{\centering \arraybackslash\hsize=0.4\hsize}X >{\centering \arraybackslash \hsize=0.35\hsize}X} 
            	\hline\hline
	Worker ID& CPU Power (W) & Unit Cost\\ [0.5ex]
	\hline 
	Worker 1 & 100 & 0.01\\ 
	Worker 2 & 150 & 0.02\\
	Worker 3 & 200 & 0.03\\ 
	Worker 4 & 250 & 0.04\\ 
	Worker 5 & 300 & 0.05\\
	Worker 6 & 350 & 0.06\\ 
	Worker 7 & 400 & 0.07\\ 
	Worker 8 & 450 & 0.08\\ 
	\hline 
        \end{tabularx}
    \end{minipage} 
\end{table}

In the network, there are $5$ cluster heads and $8$ workers with different CPU powers. We consider the hedonic coalition formation game among the cluster heads and workers. The objective of each worker is to maximize its own utility, which depends solely on the members of the coalition it belongs to. In particular, the utility of each worker is affected by its proportion of CPU power in the coalition. The simulation parameter values of the cluster heads and the workers are listed in Tables~\ref{clusterhead} and~\ref{accesspoint} respectively. 

The hedonic coalition formation game allows the workers to decide which cluster head to join. In order to decide whether to stay in or leave a coalition, the workers adopt the switch rule. Figure~\ref{fig:hedonicformation} illustrates the mechanism of the switch operations. Initially, the workers are randomly assigned to the cluster heads. Each time the coalitional structure changes, each worker evaluates its utility by comparing the utility achieved in the current coalition against the utility gain from joining other possible coalitions. As a result, each worker may perform more than one switch operation. As an example, worker~1 achieves a utility of $89$ by joining cluster head~2. As workers~3 and 5 join the coalition in supporting cluster head~2, the utility of worker~1 decreases to $14$. Worker~1 then decides to leave cluster head~2 and joins worker~7 to support cluster head~1 as it gains a higher utility of $19$. However, when worker~2 joins the coalition to support cluster head~1, worker~1's utility decreases to $14.4$. As such, worker~1 decides to perform a switch operation again where it joins worker~6 in supporting cluster head~3, achieving a utility of $16.8$.

\begin{figure}
\includegraphics[width=\linewidth]{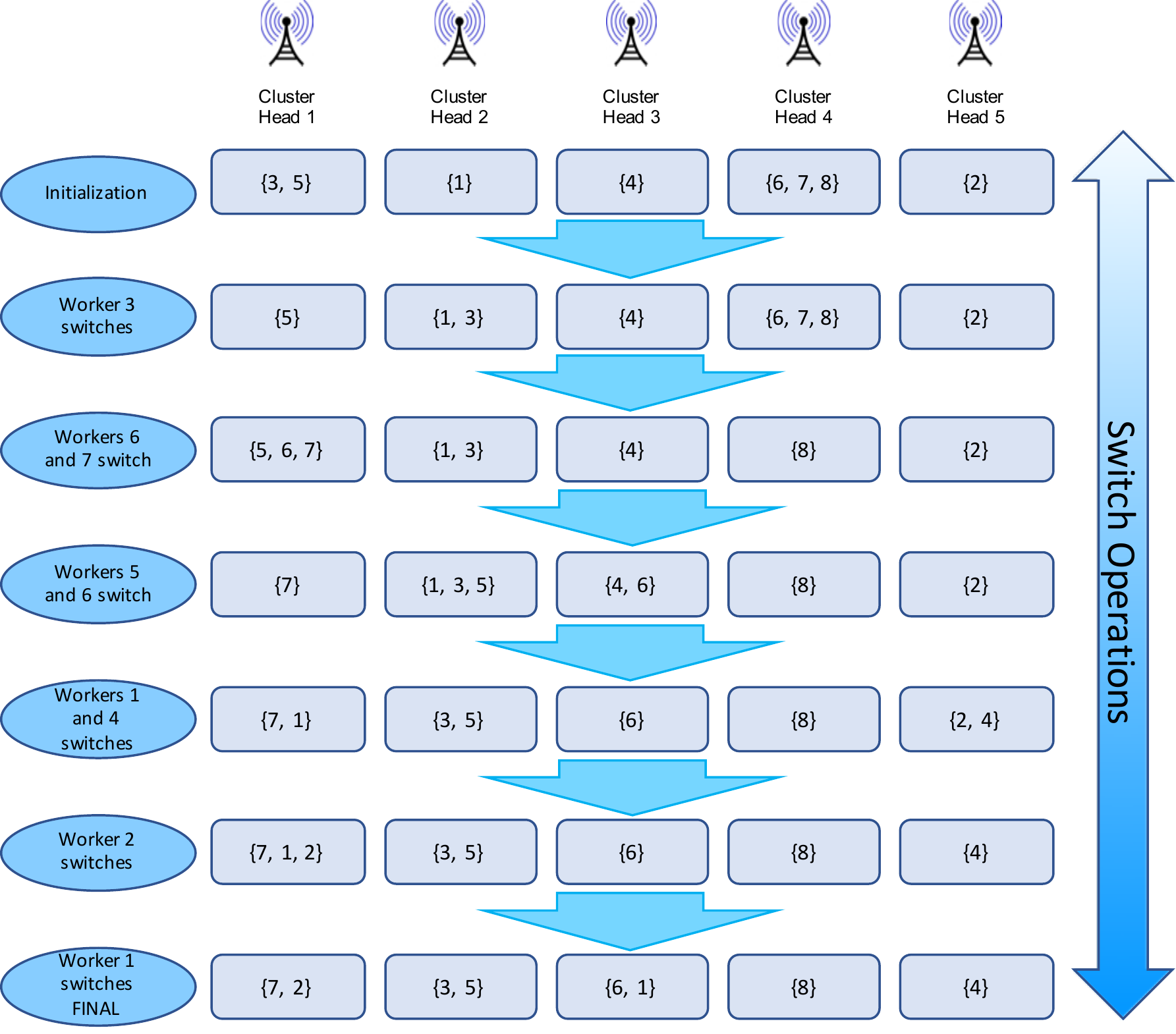}
\caption{Switch operations of the hedonic coalition formation game.}
\label{fig:hedonicformation}
\end{figure}

\begin{figure}
\centering
\includegraphics[width=\linewidth]{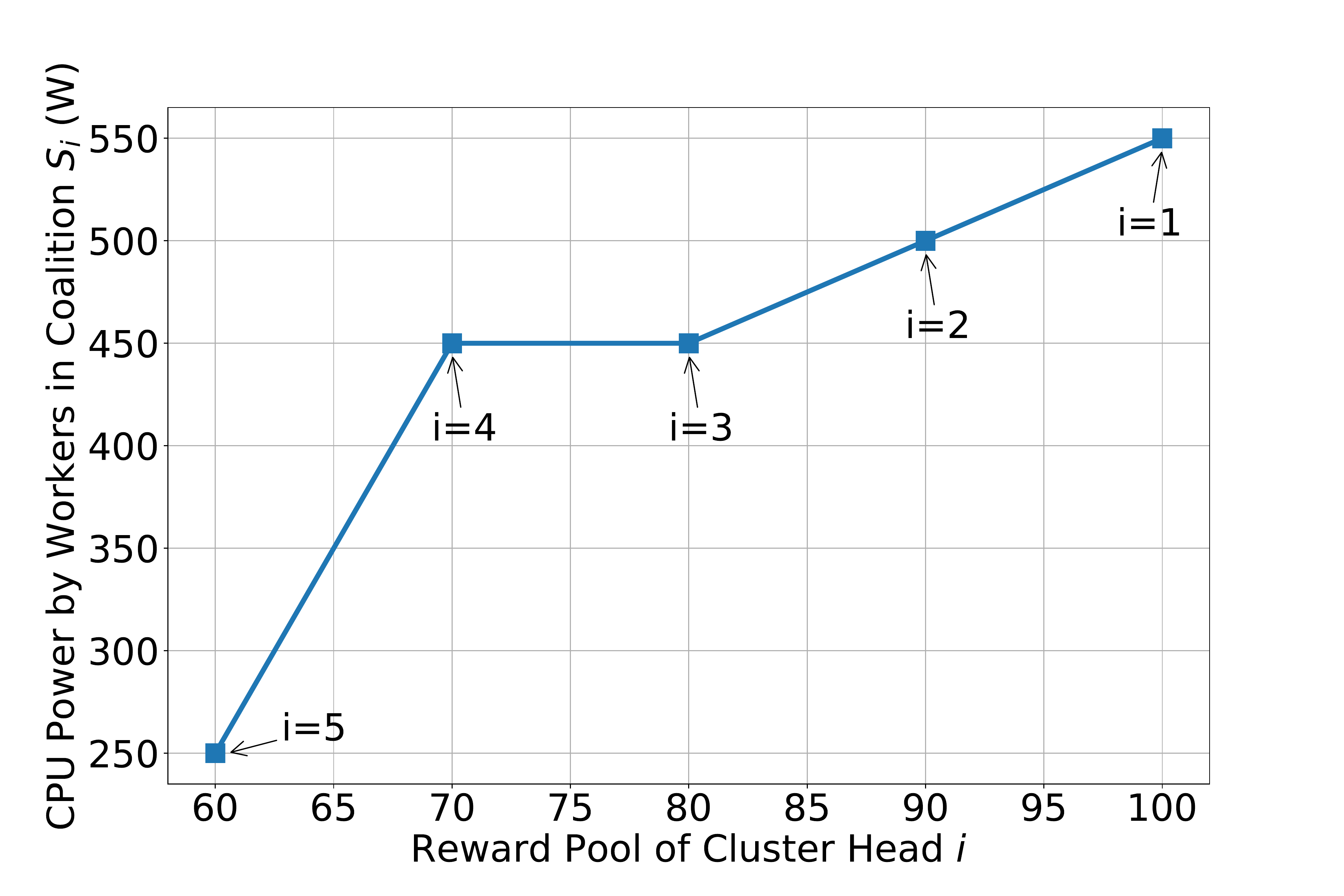}
\caption{CPU power by workers in coalition $S_i$ vs the reward pool offered by cluster head $i$.}
\label{fig:reward}
\end{figure}

From Fig.~\ref{fig:reward}, we observe that as the amount of reward pool offered by a cluster head increases, the total amount of CPU power of the workers in the coalition increases. For example, cluster head~1 offers a reward of 100 and forms a coalition with worker~2 and worker~7 having CPU powers of $150$W and $400$W respectively.

\subsection{Upper-level All-pay Auction}

\subsubsection{Monotonic Behaviour of Workers}

In the simulations, we consider a uniform distribution of the cluster heads' valuation for the rewards, where $v_i \in [0,1]$ which are independently drawn from $F(v)=v$. From Figs.~\ref{fig:onereward}-\ref{fig:diffrewards}, it can be observed that the cluster head's CPU power allocation increases monotonically with its valuation. Specifically, the higher the valuation of the cluster head for the rewards, the larger the amount of CPU power allocated for the CDC subtask. Since the cluster heads are symmetric where their valuations are drawn from the same distribution, the cluster heads with the same valuation contribute the same amount of CPU power.

\begin{figure*}
\centering
\begin{multicols}{3}

\includegraphics[width=\columnwidth]{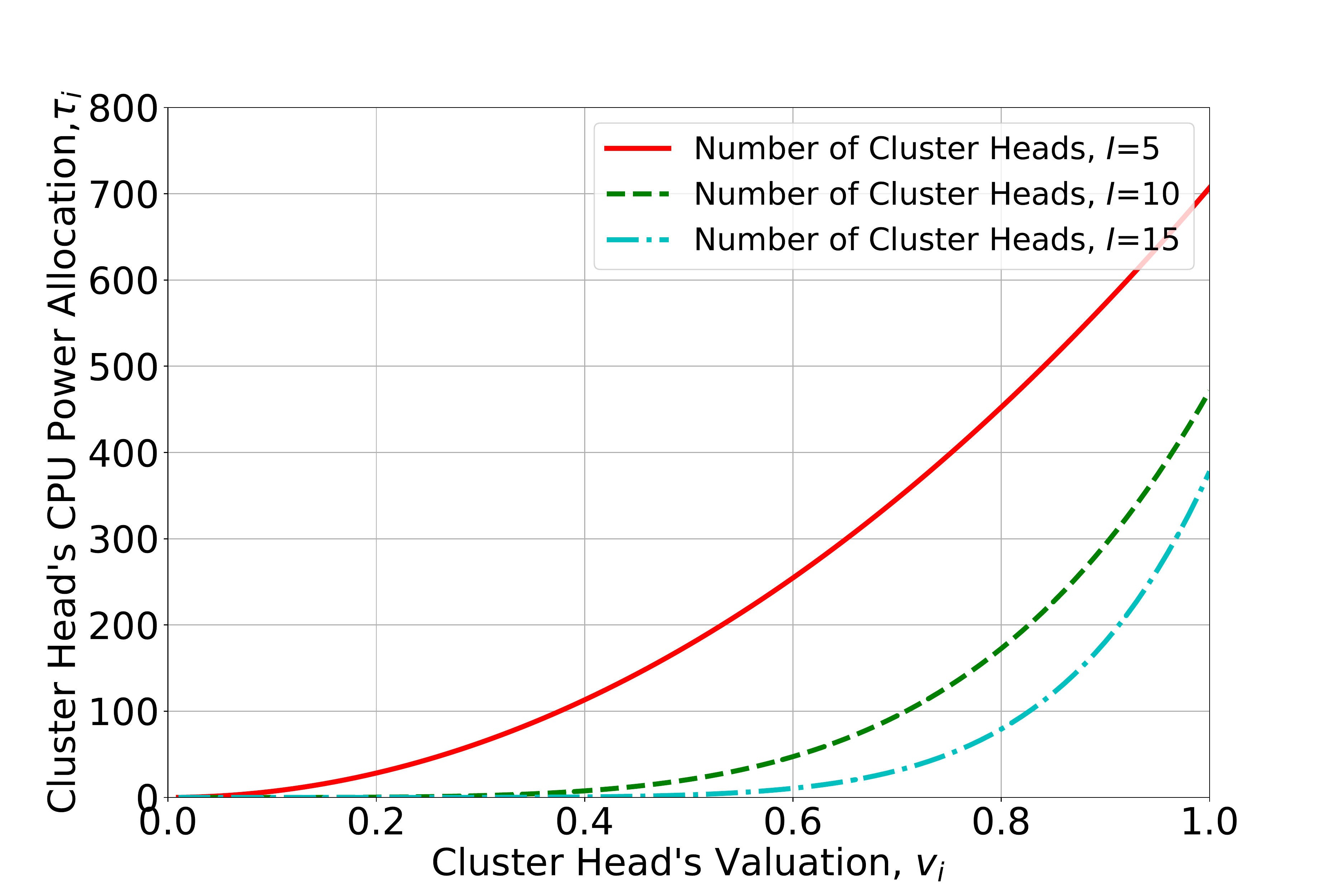}
	\caption{Only one reward is offered to the worker with the largest CPU power allocation.}
	\label{fig:onereward}

\includegraphics[width=\columnwidth]{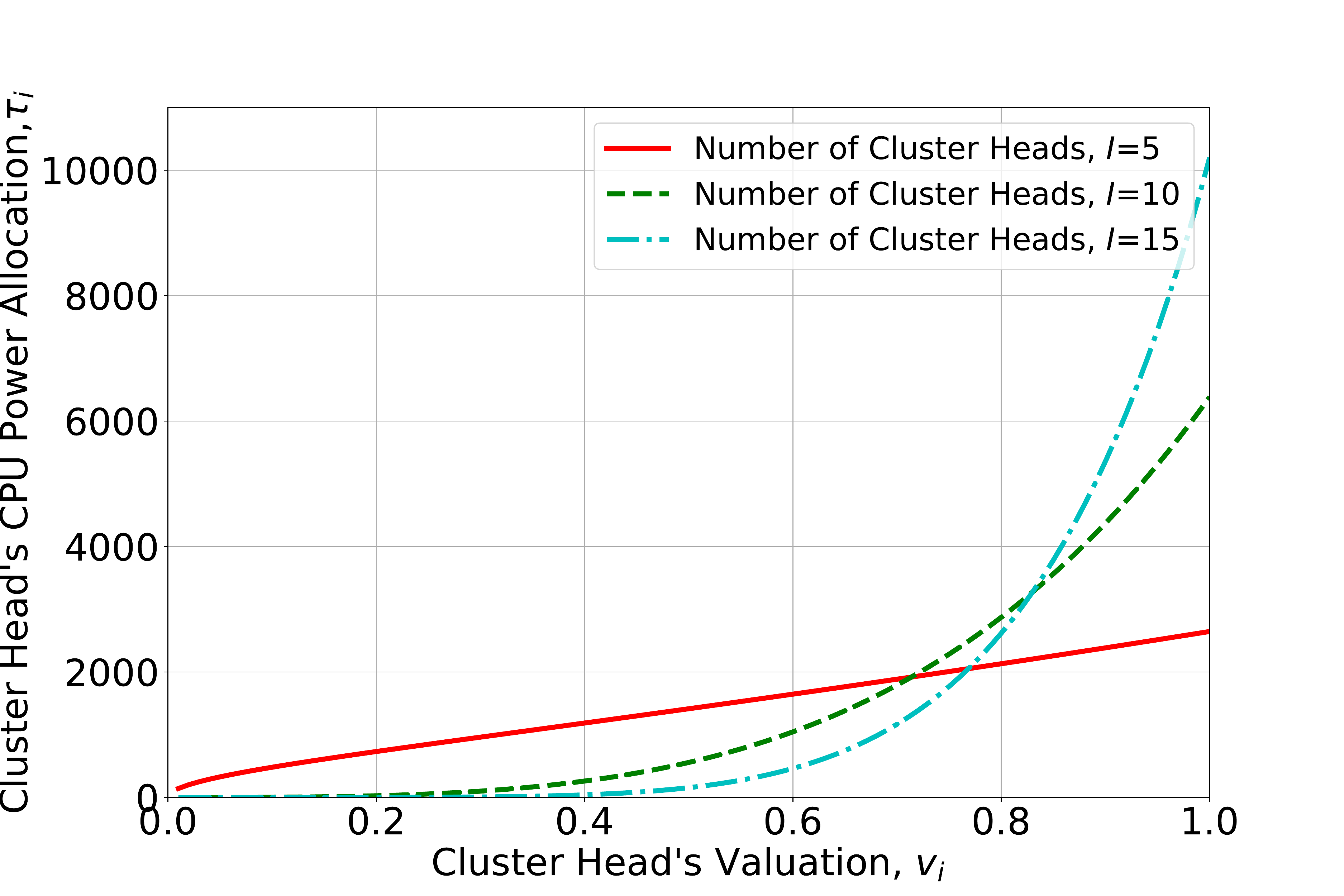}
	\caption{Homogeneous rewards, i.e., the difference between consecutive reward amounts is $0$.}
	\label{fig:homogeneous}

\includegraphics[width=\columnwidth]{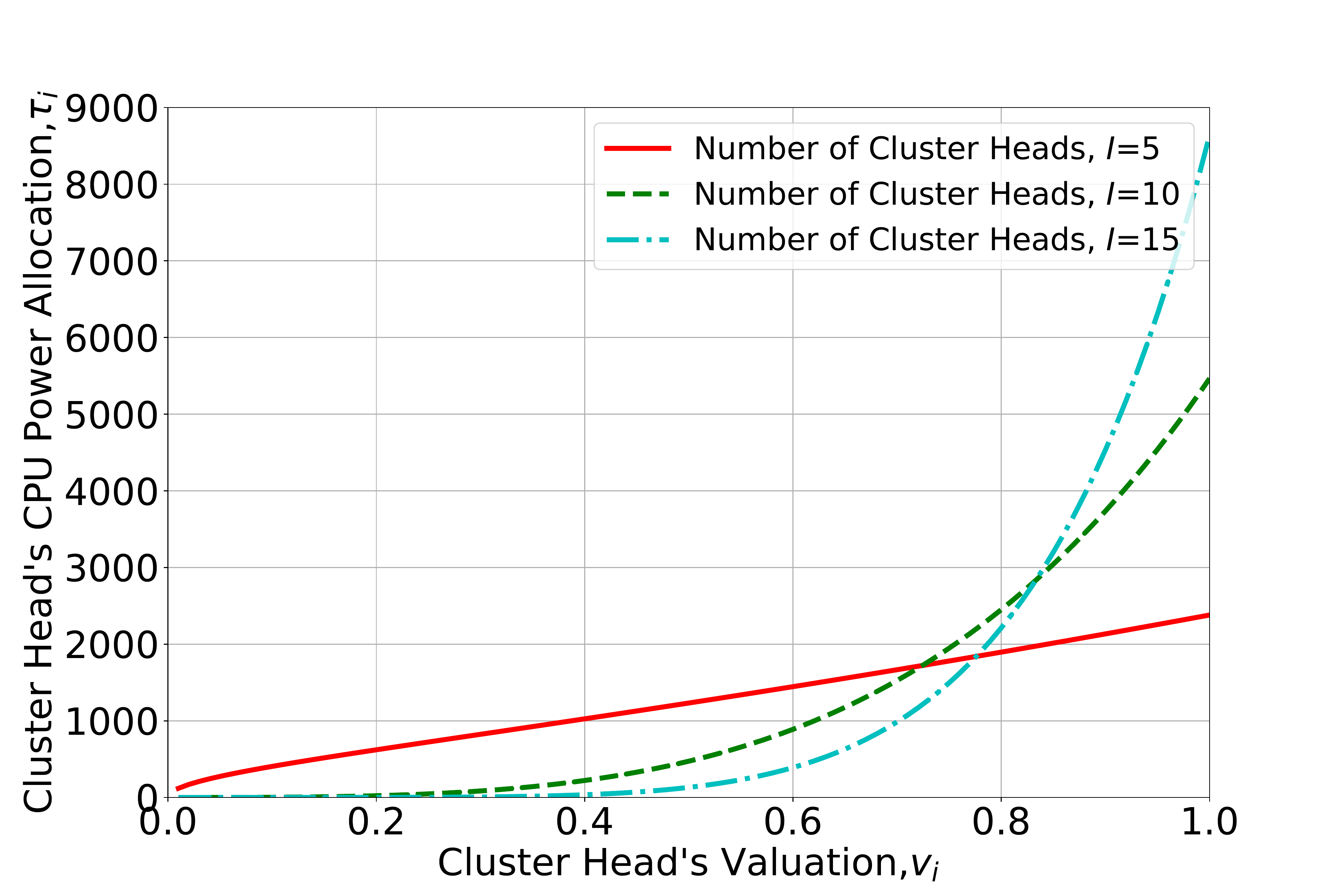}
    \caption{The difference between consecutive reward amounts is by a factor of 0.8, $M_{k+1}=0.8M_{k}$.}
    \label{fig:multigeom}

\end{multicols}
\end{figure*}

\begin{figure*}
\centering
\begin{multicols}{3}
\includegraphics[width=\columnwidth]{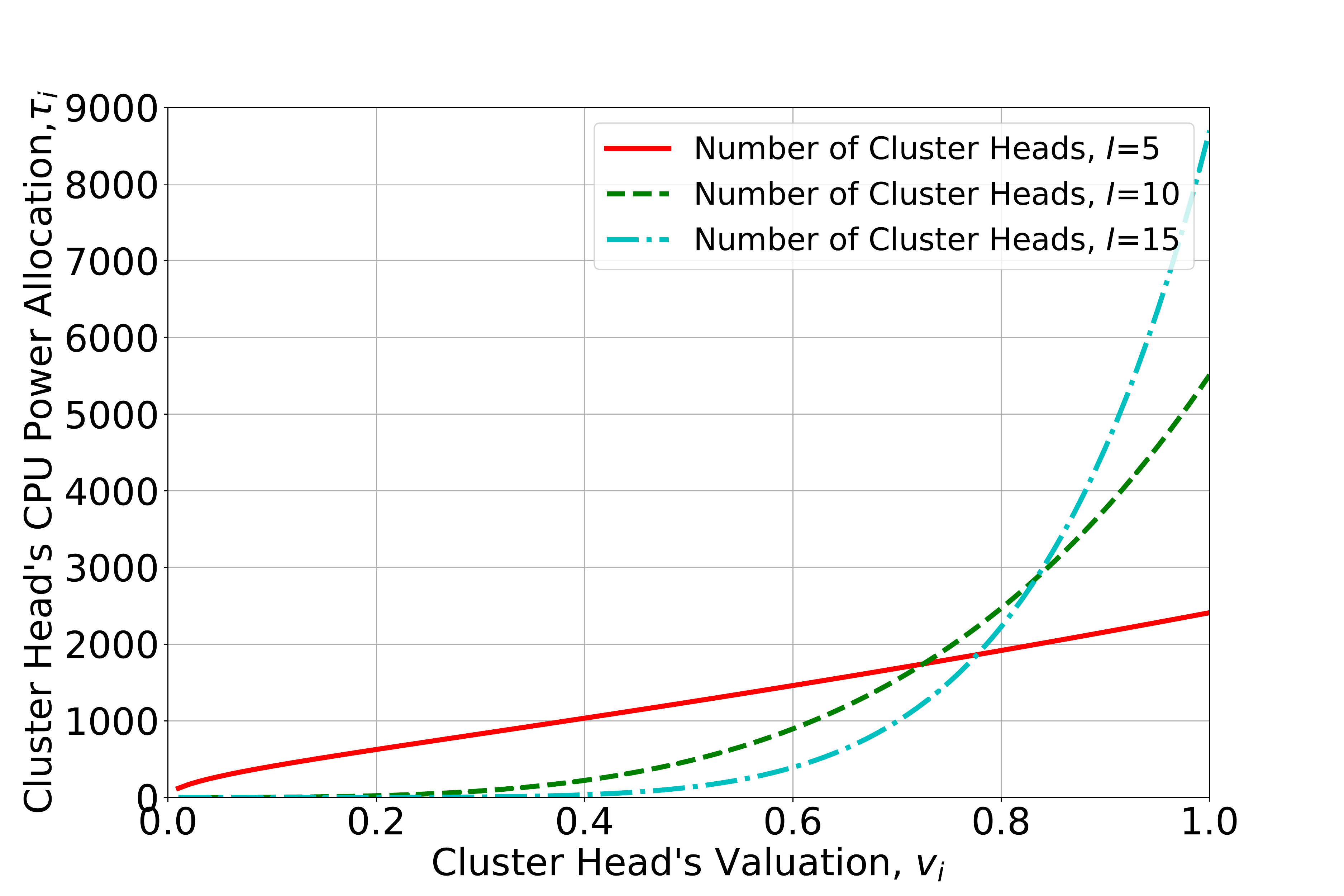} 
	\caption{The difference $M_{k}-M_{k+1}$ between consecutive reward amounts is 0.05.}
	\label{fig:multiarith0.05}
\includegraphics[width=\columnwidth]{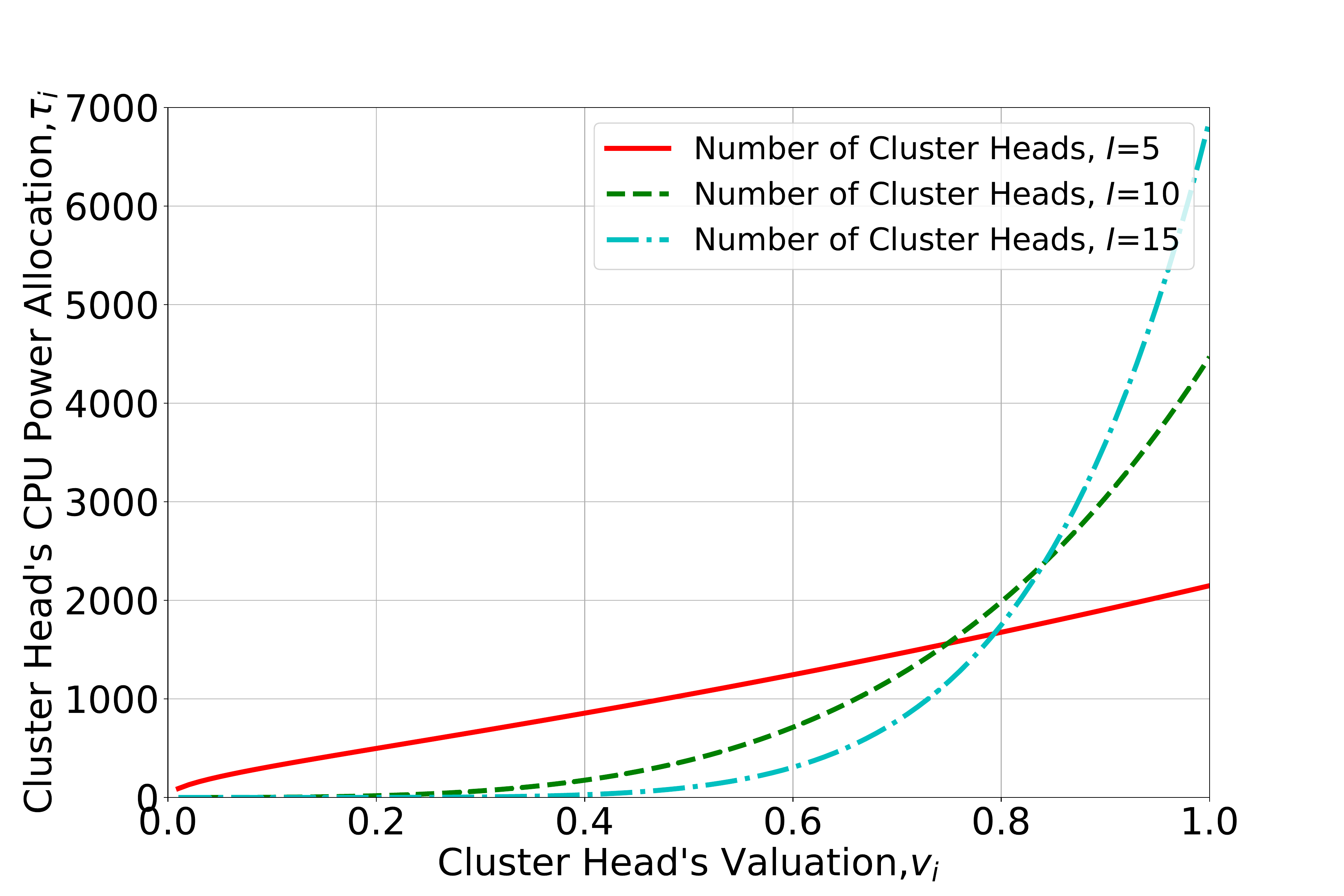} 
    	\caption{The difference $M_{k}-M_{k+1}$ between consecutive reward amounts is 0.1.}
    	\label{fig:multiarith0.1} 
\includegraphics[width=\columnwidth]{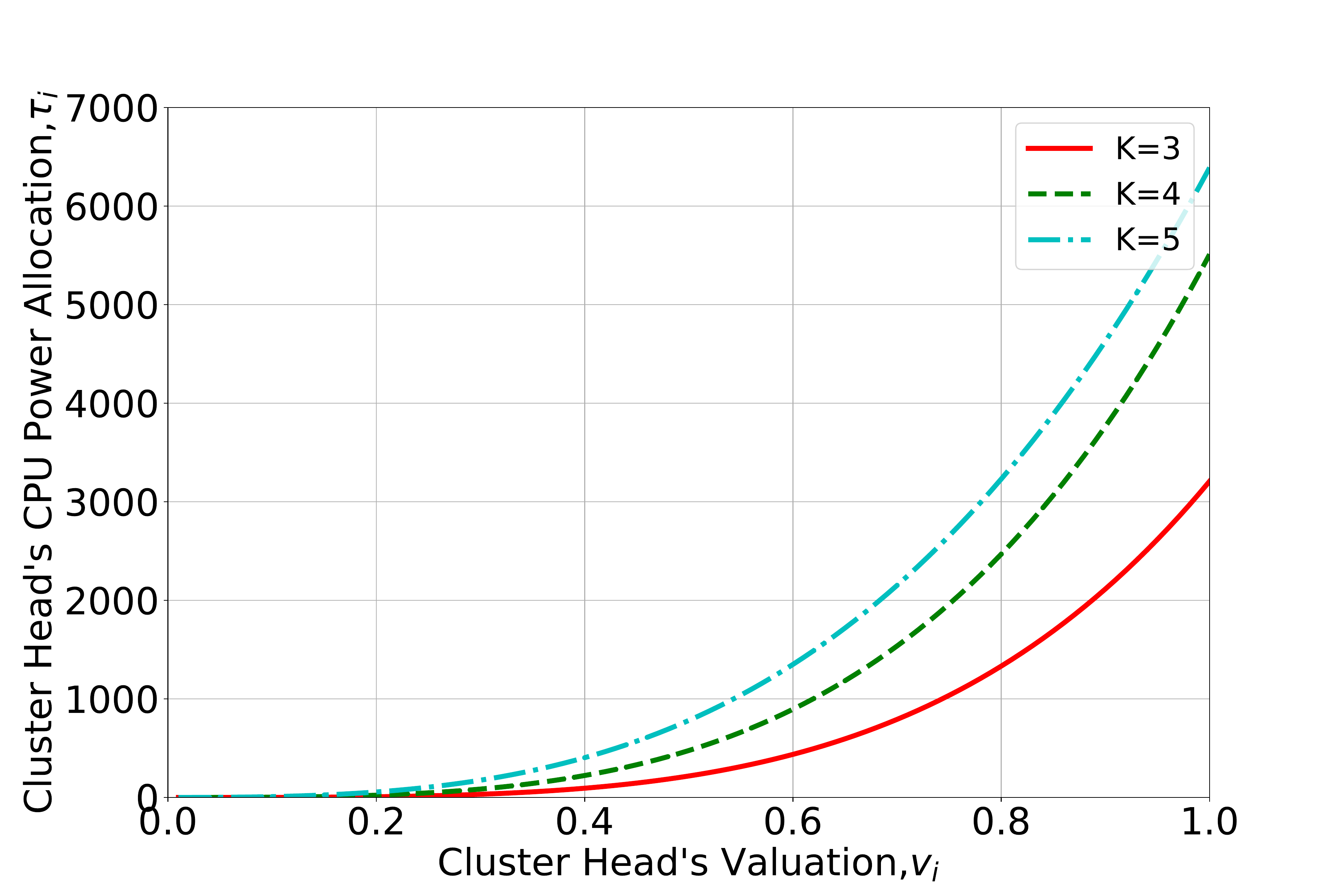}
	\caption{Different number $K$ of rewards, the difference between the consecutive rewards is 0.05, $I=10$.}
	\label{fig:diffrewards}

\end{multicols}
\end{figure*}

\subsubsection{Winner-take-all}

Based on the different reward structure adopted by the master, the cluster heads allocate their CPU power accordingly. Figure~\ref{fig:onereward} shows that when there are 5 workers and only one reward is offered to the cluster head that allocates the largest amount of CPU power, the cluster head with the highest valuation of 1, i.e., $v_{i}=1$, is only willing to contribute $707.1$W of CPU power. However, when the master offers multiple rewards, the cluster head with the same valuation of 1 is willing to contribute as high as $2380$W, $2410$W and $2149$W as shown in Fig.~\ref{fig:multigeom}, Fig.~\ref{fig:multiarith0.05} and Fig.~\ref{fig:multiarith0.1} respectively. With more rewards, the cluster heads have higher chance of winning one of the rewards. Hence, to incentivize the cluster heads to allocate more CPU power for the CDC subtasks, the master is better off offering multiple rewards than a single reward.

\subsubsection{Multiple Rewards: }The master needs to decide between homogeneous and heterogeneous reward allocation. Homogeneous rewards means the total amount of reward is split equally among the winning cluster heads whereas heterogeneous rewards are allocated based on the rank of the cluster heads where the amount of reward offered to the cluster head decreases as its rank increases.

\begin{itemize}

\item{\textbf{Homogeneous Reward Allocation: }}We observe similar trends in both homogeneous and heterogeneous reward allocation. Specifically, the cluster heads with lower valuations allocate more CPU power when there are fewer cluster heads in the network whereas the cluster heads with higher valuations allocate more CPU power when there are more cluster heads in the network. Generally, the cluster heads of both low and high valuations allocate more CPU power when homogeneous rewards are allocated. Figure~\ref{fig:homogeneous} shows that when there are 10 cluster heads in the network, a cluster head with valuation of $0.9$ allocates $4371$W, which is higher than $3735$W, $3765$W and $3041$W in Fig.~\ref{fig:multigeom}, Fig.~\ref{fig:multiarith0.05} and Fig.~\ref{fig:multiarith0.1} respectively. Similarly, in a network with 10 cluster heads, a cluster head with valuation of $0.2$ allocates $26.8$W when homogeneous rewards are allocated, which is also higher than $22.5$W, $22.6$W and $17.5$W when the differences between the consecutive rewards are a factor of 0.8, a constant of 0.05 and 0.1 respectively.

\item{\textbf{Heterogeneous Reward Allocation: }}Figure~\ref{fig:multigeom}, Fig.~\ref{fig:multiarith0.05} and Fig.~\ref{fig:multiarith0.1} show the allocation of CPU power by the cluster heads under arithmetic and geometric reward sequences. When the difference between the consecutive rewards is smaller, the cluster heads are willing to allocate more CPU power. For example, when the difference between the consecutive rewards is 0.05, i.e., $M_{k}-M_{k+1}=0.05$, $k=1,2,\ldots,K-1$, the cluster head with valuation of 1 allocates $8698$W when there are 15 cluster heads competing for 4 rewards. However, under the same setting of 15 cluster heads competing for 4 rewards, the cluster head with valuation of 1 is only willing to allocate $6875$W when the difference between the consecutive rewards is 0.1.
  
\end{itemize}

\subsubsection{Effects of Different System Parameter Values}
Apart from the different reward structures, the cluster heads also behave differently when the system parameter values, e.g., the number of cluster heads and the number of rewards, are changed.

\begin{itemize} 

\item{\textbf{More Cluster Heads: }}When there is only one reward offered to the cluster head that allocates the largest amount of CPU power, the cluster heads allocate more CPU power when there are 5 cluster heads than that of 15 cluster heads. For example, in Fig.~\ref{fig:onereward}, the cluster head with a valuation of $0.8$ allocates $452.5$W when there are 5 workers but only allocates $79.3$W for computation when there are 15 workers. When there are more cluster heads participating in the auction, the competition among the cluster heads is stiffer and the probability of winning the reward decreases. As a result, the cluster heads allocate a smaller amount of CPU power.

However, similar trends are only observed for cluster heads with low valuations, e.g., $v_{i}=0.6$, when multiple rewards are offered. When the number of cluster heads increases, the cluster heads with low valuations reduce their allocation of CPU power for the CDC subtasks. However, this is not observed for cluster heads with high valuations, e.g., $v_{i}=0.9$. Figure~\ref{fig:multigeom}, Fig.~\ref{fig:multiarith0.05} and Fig.~\ref{fig:multiarith0.1} show that the cluster heads with high valuations allocate more CPU power when there are more cluster heads competing for the multiple rewards offered by the master. Specifically, when the master offers 4 rewards with a difference of 0.05 between the consecutive rewards, the cluster head with a valuation of $0.9$ allocates $2159$W when there are 5 cluster heads but allocates $4563$W when there are 15 cluster heads. When multiple rewards are offered, since it is possible for the cluster heads to still win one of the remaining rewards even if they do not win the largest amount of reward, i.e., top reward, the cluster heads are more willing to allocate their CPU power for the CDC subtasks. 
Hence, the cluster heads with high valuations allocate more CPU power to increase their chance of winning the top reward. 

\item{\textbf{More Rewards: }} When the number of cluster heads participating in the all-pay auction is fixed, the cluster heads allocate more CPU power when there are more rewards that are offered. It is seen from Fig.~\ref{fig:diffrewards} that when there are 10 cluster heads in the all-pay auction, the cluster head with a valuation of $0.8$ allocates CPU power of $3231$W when 5 rewards are offered as compared to $1332$W and $2469$W when 3 and 4 rewards are offered respectively. 

\end{itemize}

\subsection{Comparison with Other Schemes}

\begin{figure}
\includegraphics[width=\linewidth]{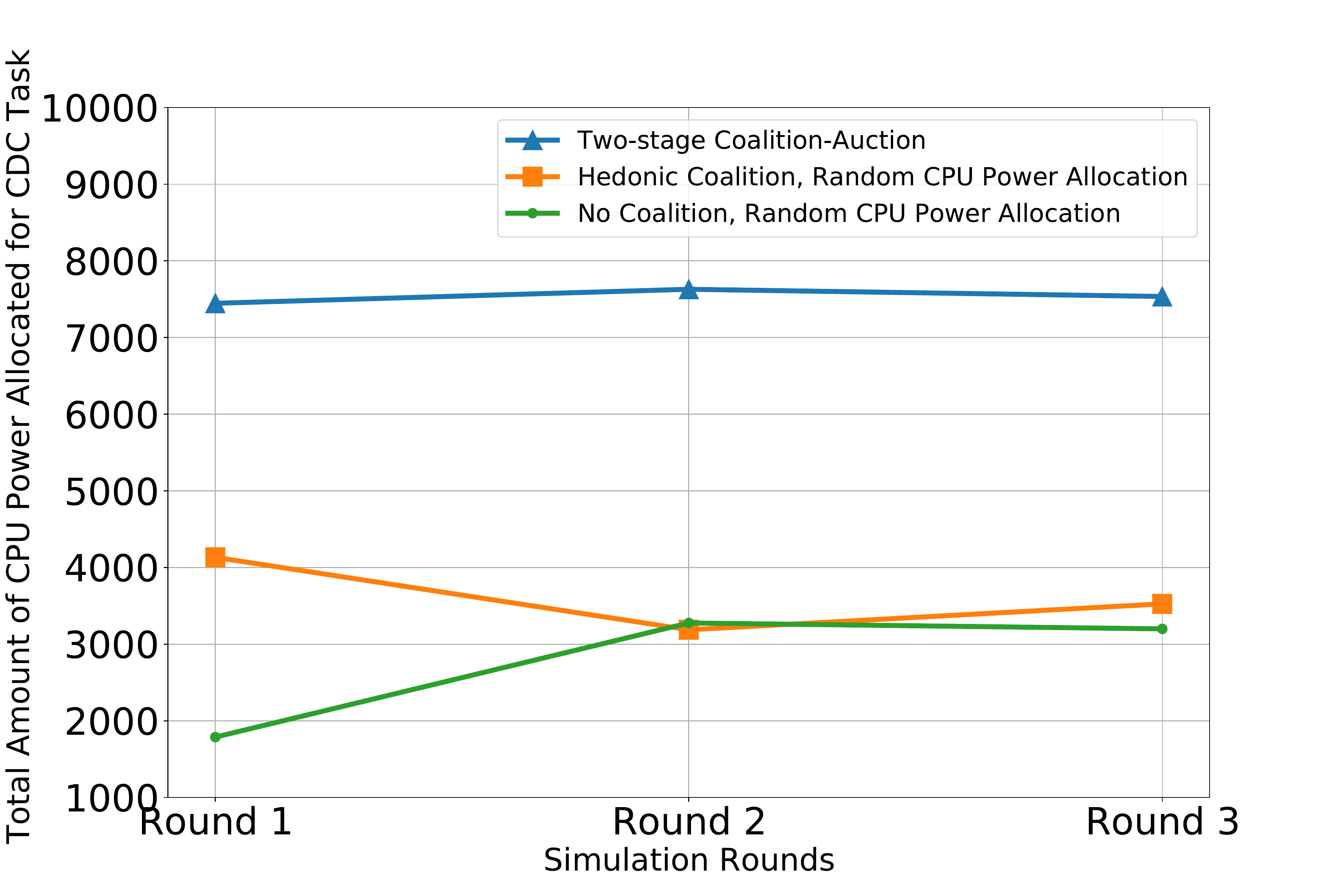}
	\caption{Comparison of the two-level coalition-auction approach with other schemes.}
	\label{fig:compare}
\end{figure}

We compare the proposed two-level coalition-auction approach against two other schemes, i.e., hedonic coalition formation among workers with random allocation of CPU power by the cluster heads and no coalition among workers with random allocation of CPU power by the cluster heads. Figure~\ref{fig:compare} shows the comparison of the performance of the proposed two-level game-theoretic approach against other schemes in a edge computing network with $5$ cluster heads. We observe that when coalitions are formed among workers, the total amount of CPU power allocated for the CDC subtasks is generally higher. This is because by forming coalitions with the workers, the cluster heads have more available CPU power to perform their computation tasks. Instead of randomly allocating CPU power for the CDC subtasks, the cluster heads are incentivized to allocate more CPU power when the master offers homogeneous rewards to the cluster heads under an all-pay auction. The average of the total amount of CPU power allocated for the CDC task under the two-level coalition-auction approach and the coalition with random allocation scheme are $7537$W and $3615$W respectively.

\section{Conclusion}
\label{sec:conclude}

In this paper, we proposed a two-level framework that incentivizes cluster heads and workers to contribute CPU power to facilitate the CDC tasks. The master applies the polynomial codes to divide the dataset and allocate the CDC subtasks to the cluster heads. In the lower level, we propose a hedonic coalition formation game in which each worker chooses its coalition based on its individual utility. In the upper level, we design an all-pay auction to incentivize cluster heads, given their coalitions of workers, to participate in the CDC tasks by contributing larger amount of CPU power. Specifically, we use the recovery threshold achieved by the polynomial codes to determine the number of rewards to be offered in the all-pay auction. Then, the master determines the reward structure to maximize its utility given the strategies of the cluster heads. Simulation results show that the utility of the cloud server is maximized when it offers multiple homogeneous rewards to incentivize the cluster heads to allocate more CDC power for the CDC subtasks.

In future work, we will consider workers with valuations chosen from probability distributions, other than uniform.

\bibliographystyle{ieeetr}
\bibliography{allpay_coalition}

\begin{IEEEbiography}[{\includegraphics[width=1in,height=1.25in,clip,keepaspectratio]{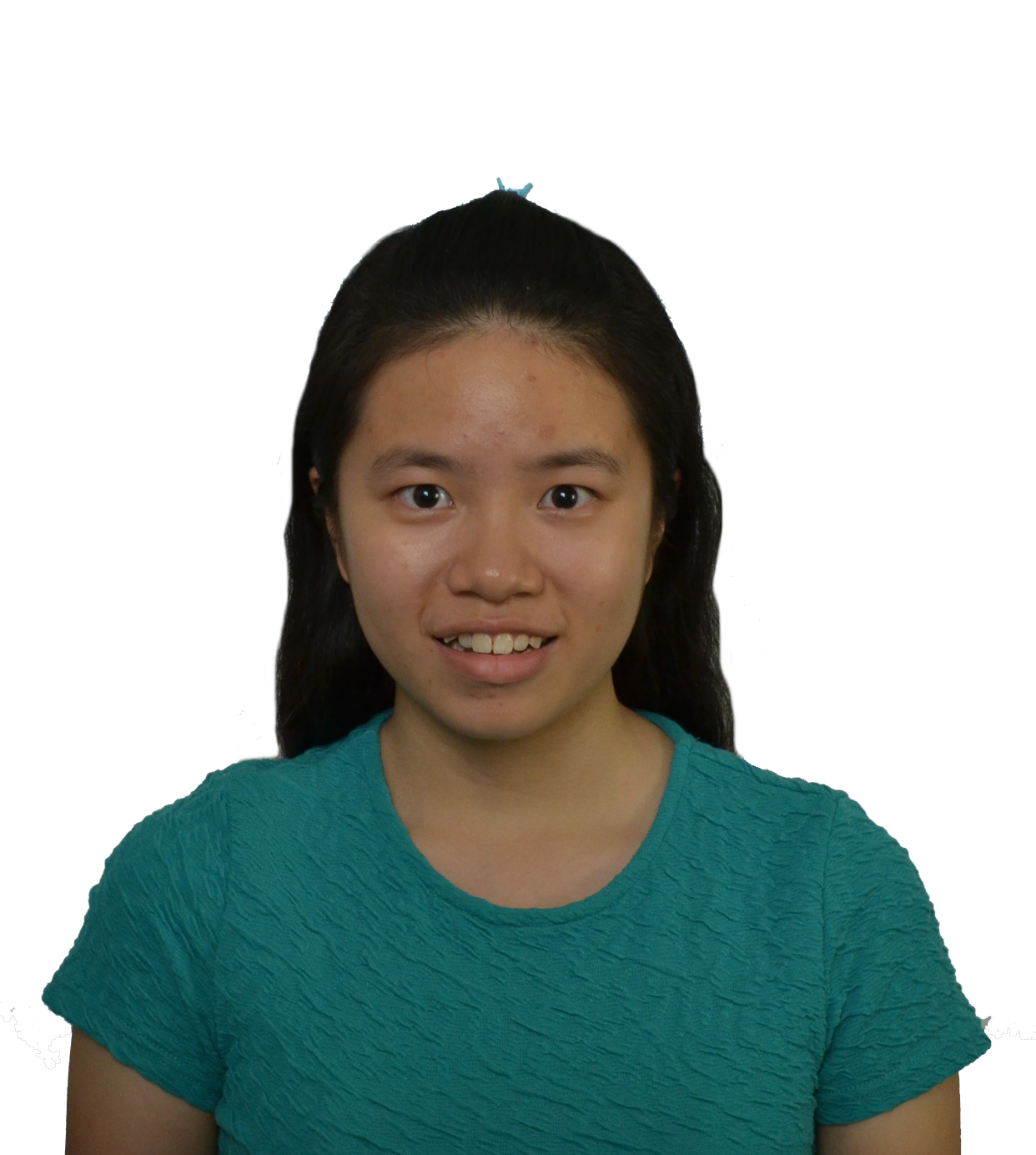}}]{Jer Shyuang Ng} graduated with Double (Honours) Degree in Electrical Engineering (Highest Distinction) and Economics from National University of Singapore (NUS) in 2019. She is currently an Alibaba PhD candidate with the Alibaba Group and Alibaba-NTU Joint Research Institute, Nanyang Technological University (NTU), Singapore. Her research interests include incentive mechanisms and edge computing.
\end{IEEEbiography}

\vspace{-1.5cm}

\begin{IEEEbiography}[{\includegraphics[width=1in,height=1.25in,clip,keepaspectratio]{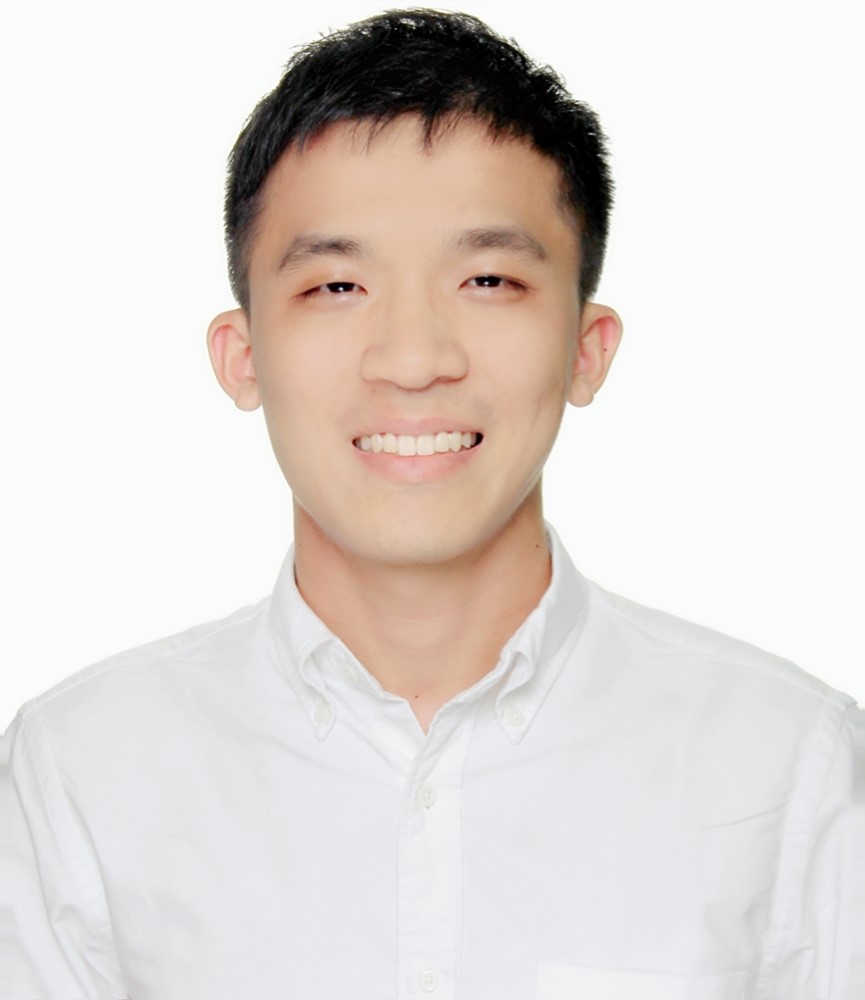}}]{Wei Yang Bryan Lim} graduated with double First Class Honours in Economics and Business Administration (Finance) from the National University of Singapore (NUS) in 2018. He is currently an Alibaba PhD candidate with the Alibaba Group and Alibaba-NTU Joint Research Institute, Nanyang Technological University, Singapore. His research interests include Federated Learning and Edge Intelligence.
\end{IEEEbiography}

\vspace{-1.5cm}

\begin{IEEEbiography}[{\includegraphics[width=1in,height=1.25in,clip,keepaspectratio]{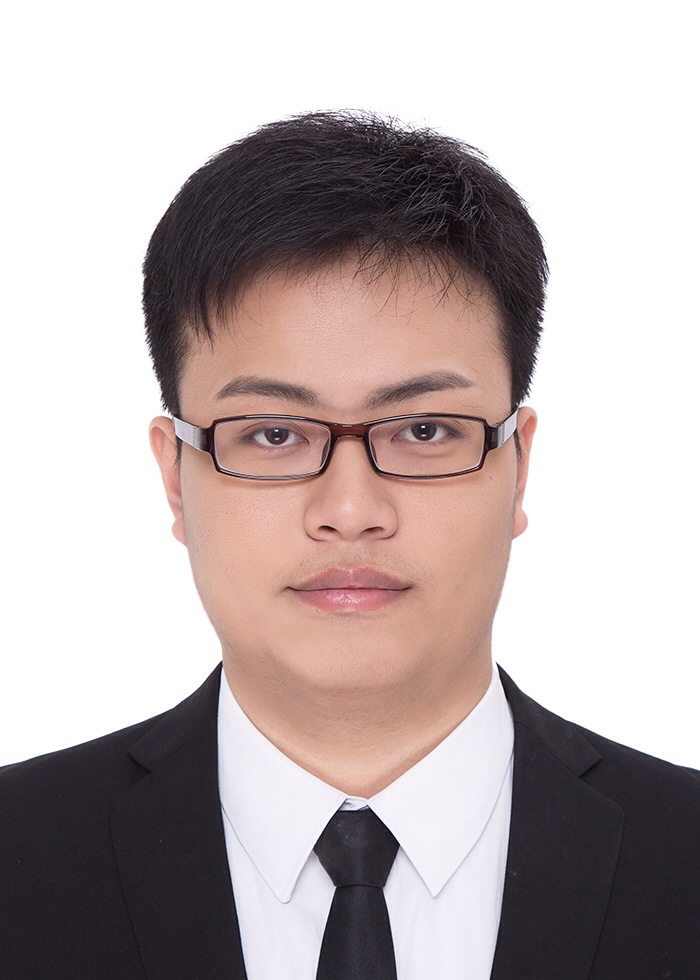}}]{Zehui Xiong}(S'17) received his B.Eng degree with the highest honors in Telecommunication Engineering from Huazhong University of Science and Technology, Wuhan, China, in Jul 2016. From Aug 2016 to Oct 2019, he pursued the Ph.D. degree in the School of Computer Science and Engineering, Nanyang Technological University, Singapore. Since Nov 2019, he has been with Alibaba-NTU Singapore Joint Research Institute. He was a visiting scholar with Department of Electrical Engineering at Princeton University from Jul to Aug 2019. He was also a visiting scholar with BBCR lab in Department of Electrical and Computer Engineering at University of Waterloo from Dec 2019 to Jan 2020. His research interests include resource allocation in wireless communications, network games and economics, blockchain, and edge intelligence.
\end{IEEEbiography}

\vspace{-1.5cm}

\begin{IEEEbiography}[{\includegraphics[width=1in,height=1.25in,clip,keepaspectratio]{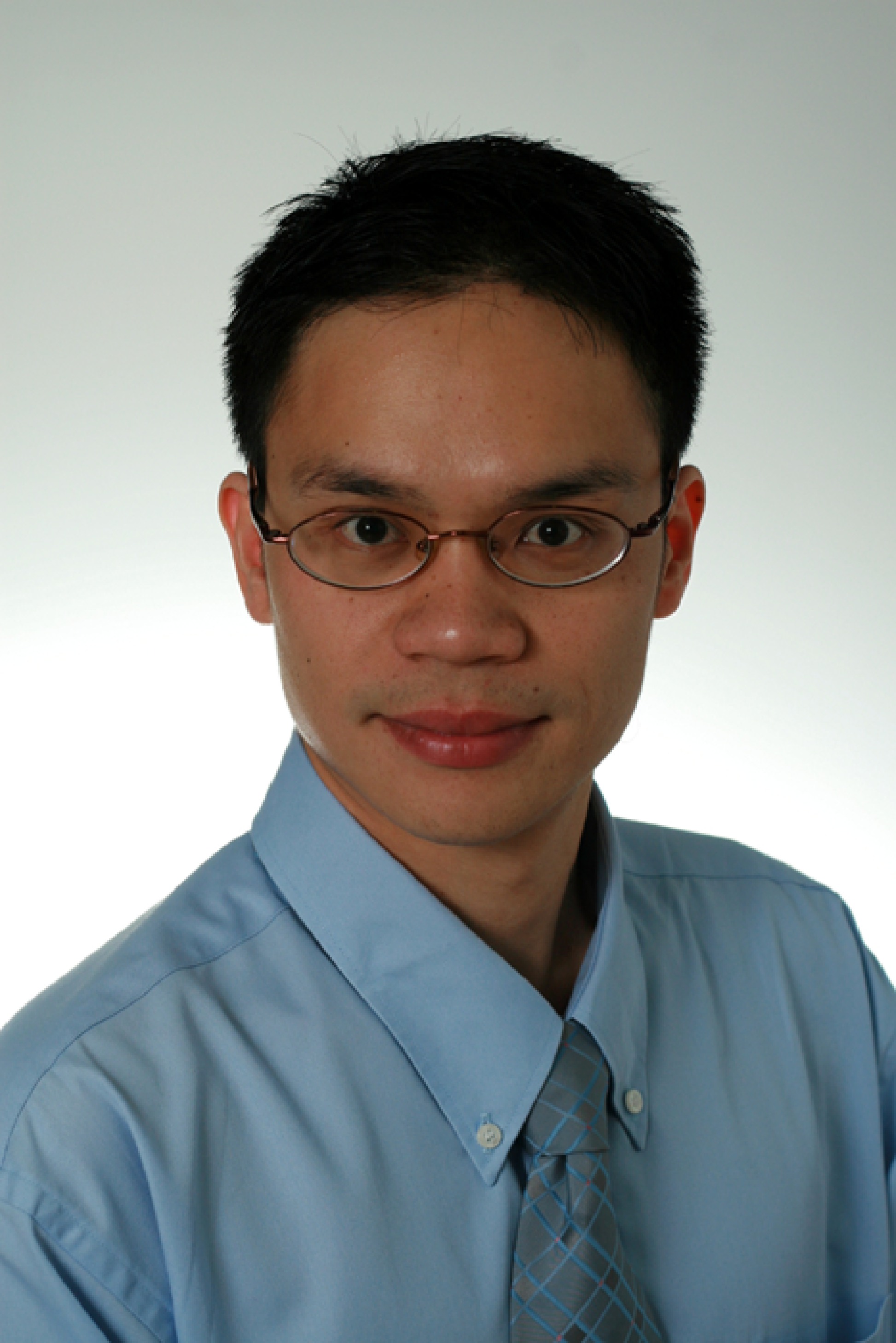}}]{Dusit Niyato} (M'09-SM'15-F'17) is currently a professor in the School of Computer Science and Engineering and, by courtesy, School of Physical \& Mathematical Sciences, at the Nanyang Technological University, Singapore. He received B.E. from King Mongkuk’s Institute of Technology Ladkrabang (KMITL), Thailand in 1999 and Ph.D. in Electrical and Computer Engineering from the University of Manitoba, Canada in 2008. He has published more than 380 technical papers in the area of wireless and mobile networking, and is an inventor of four US and German patents. He has authored four books including “Game Theory in Wireless and Communication Networks: Theory, Models, and Applications” with Cambridge University Press. He won the Best Young Researcher Award of IEEE Communications Society (ComSoc) Asia Pacific (AP) and The 2011 IEEE Communications Society Fred W. Ellersick Prize Paper Award. Currently, he is serving as a senior editor of IEEE Wireless Communications Letter, an area editor of IEEE Transactions on Wireless Communications (Radio Management and Multiple Access), an area editor of IEEE Communications Surveys and Tutorials (Network and Service Management and Green Communication), an editor of IEEE Transactions on Communications, an associate editor of IEEE Transactions on Mobile Computing, IEEE Transactions on Vehicular Technology, and IEEE Transactions on Cognitive Communications and Networking. He was a guest editor of IEEE Journal on Selected Areas on Communications. He was a Distinguished Lecturer of the IEEE Communications Society for 2016-2017. He was named the 2017, 2018, 2019 highly cited researcher in computer science. He is a Fellow of IEEE.
\end{IEEEbiography}

\vspace{-1.5cm}

\begin{IEEEbiography}[{\includegraphics[width=1in,height=1.25in,clip,keepaspectratio]{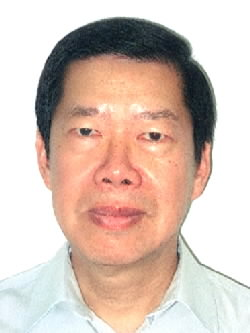}}]{Cyril Leung} received the B.Sc. (First Class Hons.) degree from Imperial College, University of London, U.K., and the M.S. and Ph.D. degrees in electrical engineering from Stanford University. He has been an Assistant Professor with the Department of Electrical Engineering and Computer Science, Massachusetts Institute of Technology, and the Department of Systems Engineering and Computing Science, Carleton University. Since 1980, he has been with the Department of Electrical and Computer Engineering, University of British Columbia (UBC), Vancouver, Canada, where he is a Professor and currently holds the PMC-Sierra Professorship in Networking and Communications. He served as an Associate Dean of Research and Graduate Studies with the Faculty of Applied Science, UBC, from 2008 to 2011. His research interests include wireless communication systems, data security and technologies to support ageless aging for the elderly. He is a member of the Association of Professional Engineers and Geoscientists of British Columbia, Canada.
\end{IEEEbiography}

\vspace{-1.5cm}

\begin{IEEEbiography}[{\includegraphics[width=1in,height=1.25in,clip,keepaspectratio]{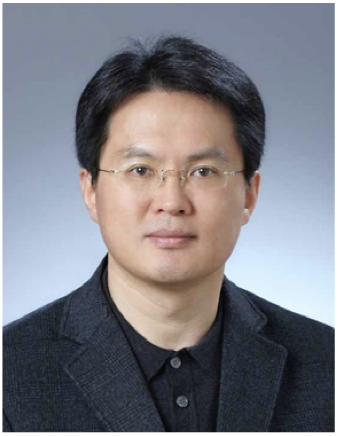}}]{Dong In Kim}(S’89–M’91–SM’02–F’19) received the Ph.D. degree in electrical engineering from the University of Southern California at Los Angeles, Los Angeles, CA, USA, in 1990. He was a tenured Professor with the School of Engi- neering Science, Simon Fraser University, Burn- aby, BC, Canada. Since 2007, he has been with Sungkyunkwan University, Suwon, South Korea, where he is currently a Professor with the College of Information and Communication Engineering.
He has been elevated to the grade of Fellow of the IEEE for his contributions to the cross-layer design of wireless communications systems. He is also a Fellow of the Korean Academy of Science and Technology and the National Academy of Engineering of Korea. He is a first recipient of the NRF of Korea Engineering Research Center in Wireless Communications for RF Energy Harvesting, from 2014 to 2021. From 2001 to 2014, he served as an Editor of Spread Spectrum Transmission and Access for the IEEE Transactions on Communications. From 2002 to 2011, he also served as an Editor and a Founding Area Editor of Cross-Layer Design and Optimization for the IEEE Transactions on Wireless Communications. From 2008 to 2011, he served as the Co-Editor-in-Chief for the IEEE/KICS Journal of Communications and Networks. He served as the Founding Editor-in-Chief for the IEEE Wireless Communications Letters, from 2012 to 2015. Since 2015, he has been serving as an Editor-at-Large of Wireless Communication I for the IEEE Transactions on Communications.
\end{IEEEbiography}


\begin{IEEEbiography}[{\includegraphics[width=1in,height=1.25in,clip,keepaspectratio]{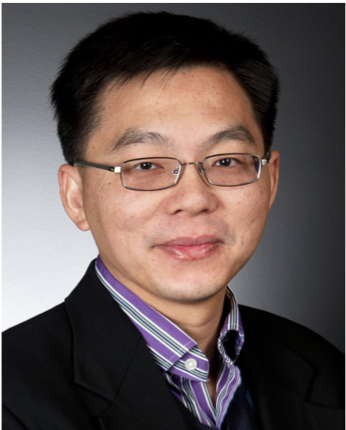}}]{Junshan Zhang} (S’98-M’00-SM’06-F’12) received the Ph.D. degree from the School of ECE, Purdue University, in 2000. He joined the School of ECEE, Arizona State University, AZ, USA, in 2000, where he has been a Professor since 2010. His research interests fall in the general field of information networks and its intersections with power networks and social networks, and fundamental problems in information networks and energy networks, including modeling and optimization for smart grid, optimization/control of mobile social networks and cognitive radio networks, and privacy/security in information networks. Dr. Zhang is a recipient of the ONR Young Investigator Award in 2005 and the NSF CAREER award in 2003. He received the Outstanding Research Award from the IEEE Phoenix Section in 2003. He co-authored two papers that received the Best Paper Runner-Up Award of the IEEE INFOCOM 2009 and the IEEE INFOCOM 2014, and a paper that received the IEEE ICC 2008 Best Paper Award. He was the TPC Co-Chair of a number of major conferences in communication networks, including INFOCOM 2012, WICON 2008, and IPCCC’06, and the TPC Vice-Chair of ICCCN’06. He was the General Chair of the IEEE Communication Theory Workshop 2007. He was an Associate Editor of the IEEE Transactions on Wireless Communications, an Editor of the Computer Network Journal, and an Editor the IEEE Wireless Communication Magazine. He was a Distinguished Lecturer of the IEEE Communications Society. He is currently serving as the Editor-in-Chief of the IEEE Transactions on Wireless Communications, an Editor-at-Large of the IEEE/ACM Transactions on Networking and an Editor of the IEEE Network Magazine.
\end{IEEEbiography}

\vspace{5.5cm}

\begin{IEEEbiography}[{\includegraphics[width=1in,height=1.25in,clip,keepaspectratio]{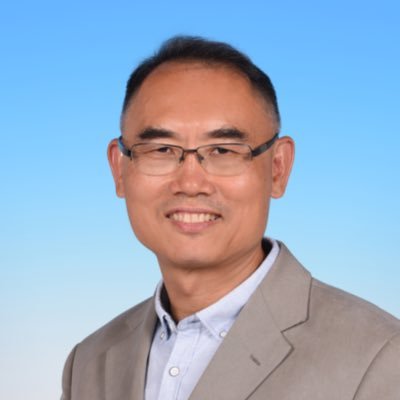}}]{Qiang Yang} is the head of AI at WeBank (Chief AI Officer) and Chair Professor at the Computer Science and Engineering (CSE) Department of Hong Kong University of Science and Technology (HKUST), where he was a former head of CSE Department and founding director of the Big Data Institute (2015-2018). His research interests include artificial intelligence, machine learning and data mining, especially in transfer learning, automated planning, federated learning and case-based reasoning. He is a fellow of several international societies, including ACM, AAAI, IEEE, IAPR and AAAS. He received his PhD from the Computer Science Department in 1989 and MSc in Astrophysics in 1985, both from the University of Maryland, College Park. He obtained his BSc in Astrophysics from Peking University in 1982.  He had been a faculty member at the University of Waterloo (1989-1995) and Simon Fraser University (1995-2001). He was the founding Editor in Chief of the ACM Transactions on Intelligent Systems and Technology (ACM TIST) and IEEE Transactions on Big Data (IEEE TBD).  He served as the President of International Joint Conference on AI (IJCAI, 2017-2019) and an executive council member of Association for the Advancement of AI (AAAI, 2016 - 2020).  Qiang Yang is a recipient of several awards, including the 2004/2005 ACM KDDCUP Championship, the ACM SIGKDD Distinguished Service Award (2017) and AAAI Innovative AI Applications Award (2016, 2019).  He was the founding director of Huawei's Noah's Ark Lab (2012-2014) and a co-founder of 4Paradigm Corp, an AI platform company. He is an author of several books including Transfer Learning (Cambridge Press), Federated Learning (Morgan Claypool), Intelligent Planning (Springer), Crafting Your Research Future (Morgan Claypool) and Constraint-based Design Recovery for Software Engineering (Springer).
\end{IEEEbiography}

\end{document}